\newif\ifabstract
\newif\iffull
\newtheorem{lemma}{Lemma}
\def\boxit#1{\vbox{\hrule\hbox{\vrule\kern3pt
  \vbox{\kern3pt#1\kern3pt}\kern3pt\vrule}\hrule}} 
\def\Box{\boxit{\null}} 
\newcommand{\proofend}{ }
\newcommand{\proofend}{\hfill\Box}
\newcommand{\junk}[1]{}
\let\bsection=\section
\let\bsubsection=\subsection
\let\lsubsubsection=\subsubsection
\let\bparagraph=\paragraph
\newenvironment{enumroman}
{

\begin{enumerate}}
{\end{enumerate}}
\newenvironment{enumalpha}
{

\begin{enumerate}}
{\end{enumerate}}
\newcommand{\set}[1]{\left\{#1\right\}}
\newcommand{\card}[1]{\left\vert#1\right\vert}
\newcommand{\Oh}[1]{O \left ( #1 \right )}
\newcommand{\flo}[1]{\left\lfloor #1 \right \rfloor}
\newcommand{\Thetah}[1]{\Theta \left ( #1 \right )}
\newcommand{\oh}[1]{o \left ( #1 \right )}
\newcommand{\rangeexp}{c}
\newcommand{\sequence}{S}
\newcommand{\sequencep}{\sequence'}
\newcommand{\sequencet}{\sequence'''}
\newcommand{\sequences}{\sequence''}
\newcommand{\subrangeexp}{\epsilon}
\newcommand{\polyexp}{\beta}
\newcommand{\polydist}{\delta}
\newcommand{\distfan}{t}
\newcommand{\subrange}{r}
\newcommand{\memfactor}{\alpha}
\newcommand{\memsize}{m}
\newcommand{\enca}[1]{\mathbb{#1}}
\newcommand{\seta}[1]{\mathcal{#1}}
\newcommand{\distel}{d}
\newcommand{\unitsize}{u}
\newcommand{\finalunitsize}{f}
\newcommand{\finalunitnum}{s}
\newcommand{\numaggr}{k}
\newcommand{\polylog}[1]{\,\mbox{polylog}\left(#1\right)}
\newcommand{\inlinestep}[1]{\textit{#1}}
\newcommand{\hi}{\mathrm{hi}}
\newcommand{\lo}{\mathrm{lo}}
\newcommand{\twodots}{\ldots}   
\renewcommand{\lg}{\log}
\def\compactify{ }
\let\latexusecounter=\usecounter
\newenvironment{itemize*}
  {\def\usecounter{\compactify\latexusecounter}
   \begin{itemize}}
  {\end{itemize}\let\usecounter=\latexusecounter}
\newenvironment{enumerate*}
  {\def\usecounter{\compactify\latexusecounter}
   \begin{enumerate}}
  {\end{enumerate}\let\usecounter=\latexusecounter\smallskip}
\title{Radix Sorting With No Extra Space}
\author{
 Gianni\ Franceschini \\ Univ. of Pisa \\ {\tt francesc@di.unipi.it}
\and 
 S. Muthukrishnan \\ Google Inc., NY \\ {\tt muthu@google.com}
\and
 Mihai P\v{a}tra\c{s}cu \\ MIT \\ {\tt mip@mit.edu}
}
\author{
     Gianni\ Franceschini\inst{1}
\and S. Muthukrishnan\inst{2}
\and Mihai P\v{a}tra\c{s}cu\inst{3}
}
\institute{
     Dept of Computer Science, Univ. of Pisa; \email{francesc@di.unipi.it}
\and Google Inc., NY; \email{muthu@google.com}\\
\and MIT, Boston; \email{mip@mit.edu}
}
\begin{document}
\maketitle

\begin{abstract}
It is well known that $n$ integers in the range $[1,n^c]$ can be
sorted in $O(n)$ time in the RAM model using radix sorting. More
generally, integers in any range $[1,U]$ can be sorted in
$O(n\sqrt{\log\log n})$ time~\cite{HT}.  However, these algorithms use
$O(n)$ words of extra memory. Is this necessary?

We present a simple, stable, integer sorting algorithm for words of
size $O(\log n)$, which works in $O(n)$ time and uses only $O(1)$
words of extra memory on a RAM model. This is the integer sorting case
most useful in practice.  We extend this result with same bounds to
the case when the keys are read-only, which is of theoretical
interest. Another interesting question is the case of arbitrary $c$.
Here we present a black-box transformation from any RAM sorting
algorithm to a sorting algorithm which uses only $O(1)$ extra space
and has the same running time.  This settles the complexity of
in-place sorting in terms of the complexity of sorting.
\end{abstract}

\bsection{Introduction}\label{sec:intro}

Given $n$ integer {\em keys} $S[1\ldots n]$ each in the range $[1,n]$, they can 
be sorted in $O(n)$ time using $O(n)$ space by {\em bucket sorting}.
This can be extended to the case when the keys are in the range $[1,n^c]$ for some
positive constant $c$ by {\em radix sorting} that uses repeated bucket
sorting with $O(n)$ ranged keys. The crucial point is to do each bucket sorting 
{\em stably}, that is, if positions $i$ and $j$, $i <j$, had the same key $k$,
then the copy of $k$ from position $i$ appears before that from position $j$
in the final sorted order. Radix sorting takes $\Oh{\rangeexp n}$ time and $O(n)$ space.
More generally, RAM sorting with integers in the range $[1,U]$ is a much-studied problem.
Currently, the best known bound is the randomized algorithm in~\cite{HT} that takes 
$O(n\sqrt{\log \log n})$ time, and the deterministic algorithm in~\cite{A} that takes
$O(n\log\log n)$ time. These algorithms also use $O(n)$ words of extra memory 
in addition to the input. 

We ask a  basic question: {\em do we need $O(n)$ auxiliary space for 
integer sorting}? The ultimate goal would be to design {\em in-place}
algorithms for integer sorting that uses only $O(1)$ extra words, 
This question has been explored in depth for comparison-based sorting,
and after a series of papers, we now know that in-place, stable 
comparison-based sorting can be done in $O(n\log n)$ time~\cite{SS87}. 
Some very nice algorithmic techniques have been developed in this quest.
However, no such results are
known for the integer sorting case. Integer sorting is used 
as a subroutine in a number of algorithms that deal with trees and graphs,
including, in particular, sorting the transitions of a finite state machine. 
Indeed, the problem arose in that context for us. In these applications,
it is  useful if one can sort in-place in $O(n)$ time.
From a theoretical perspective, it is likewise interesting to know
if the progress in RAM sorting, including~\cite{FredmanWillard,A,HT},
really needs extra space.

Our results are in-place algorithms for integer sorting. Taken
together, these results solve much of the issues with space efficiency
of integer sorting problems. In particular, our contributions are
threefold.

\bparagraph{A practical algorithm.}
In Section~\ref{sec:simple}, we present a stable integer sorting algorithm for
$O(\log n)$ sized words that takes $O(n)$ time and uses
only $O(1)$ extra words. 

This algorithm is a simple and practical replacement to radix sort. In the numerous applications where 
radix sorting is used, this algorithm can be used to improve
the space usage from $O(n)$ to only $O(1)$ extra words. 
We have implemented the algorithm with positive results.

One key idea of the algorithm is to compress a portion of the input,
modifying the keys. The space thus made free is used as extra space
for sorting the remainder of the input.

\bparagraph{Read-only keys.}
It is theoretically interesting if integer sorting can be performed
in-place {\em without modifying the keys}. The algorithm above
does not satisfy this constraint. In Section~\ref{sec:unstable}, we present a more sophisticated
algorithm that still takes linear time and uses only $O(1)$ extra
words without modifying the keys. 
In contrast to the previous algorithm, we cannot create space for 
ourselves by compressing keys. Instead, we 
introduce a new technique of {\em pseudo pointers} which we believe will find
applications  in other succinct data structure problems. 
The technique is based on keeping a set of distinct keys as a pool
of preset read-only pointers in order to 
maintain linked lists as in bucket sorting. 

As a theoretical exercise, 
\iffull
in Section~\ref{sec:stable}, we also consider
\else
the full version of this paper also considers
\fi
the case when this sorting has to be done stably. We present an 
algorithm with identical performance that is also stable. 
 Similar to the other in-place 
stable sorting algorithms e.g., 
comparison-based sorting~\cite{SS87}, this algorithm is 
quite detailed and needs very careful management of keys as they
are permuted. The resulting algorithm is likely not of practical value,
but it is still fundamentally important  to know that bucket and 
radix sorting can indeed be solved stably in $O(n)$ time with only
$O(1)$ words of extra space. For example, 
even though comparison-based sorting has been well studied at least since
60's, it was not until much later that  optimal, stable in-place
comparison-based sorting was developed~\cite{SS87}.

\bparagraph{Arbitrary word length.}

Another question of fundamental theoretical interest is whether the recently
discovered integer sorting algorithms that work with long keys and sort in
$o(n\log n)$ time, such as~\cite{FredmanWillard,A,HT}, need
any auxiliary space. In Section~\ref{sec:RAM}, we present a  black-box
transformation from any RAM sorting algorithm to an  sorting
algorithm  which uses only $O(1)$ extra space, and retains the same time bounds. 
As a result, the running time bounds of~\cite{A,HT} can now be matched
with only $O(1)$ extra space. This transformation relies on  a fairly
natural technique of compressing a
portion of the input to make space for simulating space-inefficient RAM sorting
algorithms.

\bparagraph{Definitions.}

Formally, we are given a sequence $\sequence$ of $n$ elements. 
The problem is to sort $\sequence$ according to the integer keys,
under the following assumptions:

\newcounter{opseq}%
\setcounter{opseq}{0}%
\renewcommand{\theopseq}{\alph{opseq}}%
\begin{enumroman}
\item Each element has an integer key
within the interval $[1,U]$.
\item\label{prob:main:opseq} The following unit-cost operations are allowed on
$\sequence$:
\refstepcounter{opseq} (\theopseq) indirect address of any position of
$\sequence$; 
\refstepcounter{opseq} (\theopseq)\label{prob:main:opseq:readonly} read-only
access to the key of any element; 
\refstepcounter{opseq} (\theopseq) exchange of the positions of any two
elements. 
\item The following unit-cost operations are allowed on integer values of
$O(\log U)$
bits: addition, subtraction, bitwise AND/OR and unrestricted bit shift. 
\item\label{prob:main:cond:memory} Only $O(1)$ auxiliary words of memory are
allowed; each word had $\log U$ bits.
\end{enumroman}


For the sake of presentation, we will refer to the elements' keys as
if they were the input elements. For example, for any two elements $x$,
$y$, instead of writing that the key of $x$ is less
than the key of $y$ we will simply write $x<y$.
We also need a precise definition of the {\em rank} of
an element in a sequence when multiple occurrences of keys are allowed: 
the \emph{rank} of an element $x_i$ in a sequence $x_1\ldots x_t$ 
is the cardinality of the multiset
$\left\{x_j\;\vert\; x_j<x_i \mbox{ or } (x_j=x_i \mbox{ and }
  j\le i)\right\}$.

\bsection{Stable Sorting for Modifiable Keys}
\label{sec:simple}

We now describe our simple algorithm for (stable) radix
sort without additional memory. 

\bparagraph{Gaining space.}
The first observation is that numbers
in sorted order have less entropy than in arbitrary order. In
particular, $n$ numbers from a universe of $u$ have binary entropy
$n\lg u$ when the order is unspecified, but only $\lg \binom{u}{n} =
n\lg u - \Theta(n \lg n)$ in sorted order. This suggests that we can
``compress'' sorted numbers to gain more space:

\begin{lemma}  \label{lem:compress}
A list of $n$ integers in sorted order can be represented as: (a) an
array $A[1\twodots n]$ with the integers in order; (b) an array of $n$
integers, such that the last $\Theta(n\lg n)$ bits of the array are
zero. Furthermore, there exist in-place $O(n)$ time algorithms for
switching between representations (a) and (b).
\end{lemma}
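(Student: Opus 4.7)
The plan is to use an Elias--Fano style encoding. Let $w = \lceil \lg u \rceil$ and $h = \lfloor \lg n \rfloor$, and split each $a_i$ into a high part $b_i = \lfloor a_i / 2^{w-h} \rfloor \in [0, 2^h)$ and a low part $c_i = a_i \bmod 2^{w-h}$; sortedness of the $a_i$ gives $b_1 \le b_2 \le \cdots \le b_n$. Representation (b) lays out the packed low stream $c_1 c_2 \cdots c_n$ in the first $n(w-h)$ bits of the array, followed by a unary encoding of the sorted $b$-sequence (at most $n + 2^h \le 3n$ bits), followed by $\Theta(n \lg n)$ zero bits filling out the array. The useful length is $n(w-h) + O(n) = \lg\binom{u}{n} + O(n)$, as required.

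For the forward conversion $(a)\to(b)$ I would proceed in three phases. Phase~1 is a purely local relabeling of each $w$-bit word as $(c_i, b_i)$: no bits move, since $c_i$ already occupies the low $w-h$ bits of $a_i$ and $b_i$ the top $h$ bits. Phase~2 performs an in-place bit-granularity perfect unshuffle of the stream $c_1 b_1 c_2 b_2 \cdots c_n b_n$ into the separated form $c_1 \cdots c_n b_1 \cdots b_n$; this is a classical in-place permutation solvable by cycle-leader chasing in $O(n)$ time using $O(1)$ working words. Phase~3 rewrites the explicit packed $b$-stream as its unary encoding and zeroes the freed tail.

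Phase~3 is the main obstacle. Globally the unary encoding is shorter than the explicit $b$-stream by a factor $\Omega(\lg n)$, but a single-direction sweep can stall locally: a single large $b_i$ contributes a unary run of up to $2^h$ bits while only $h$ bits per $b$ are being read, and for adversarial inputs (e.g.\ both $b_1$ and $b_n-b_{n-1}$ large) the left-to-right and right-to-left sweeps can fail simultaneously. My plan is to first scan the $b$-stream once in $O(n)$ time to identify the ``jumps'' (indices where $\delta_i = b_i - b_{i-1}$ is large), recording at most $O(1)$ such indices at a time in working registers. The second pass then rewrites the stream in segments delimited by the jumps: between jumps the deltas are small, so a local single-direction sweep is safe, and each jump is handled by shifting a constant-size neighborhood of bits using the distributed slack that the global compression ratio provides.

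The reverse conversion $(b)\to(a)$ runs the three phases backwards: expand the unary encoding into the explicit packed $b$-stream using the same segmented technique, re-interleave the $c$'s and $b$'s via the in-place perfect shuffle (the phase~2 permutation is its own inverse up to relabeling), and reinterpret each $w$-bit word as $a_i$. Each phase runs in $O(n)$ time and uses only $O(1)$ extra words, yielding the claimed bounds.
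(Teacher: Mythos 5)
The overall Elias--Fano flavor of your encoding matches the paper's intent, but your Phase~3 has a genuine gap, and it is precisely the gap the paper's proof calls out as the reason a naive in-place rewrite does not work. The difficulty is not merely ``a single large $b_i$'': over a prefix, the unary encoding can be strictly \emph{longer} than the explicit packed $b$-stream it is replacing (e.g.\ $b_1=\Theta(n)$ already produces a unary run of $\Theta(n)$ bits where only $h=\Theta(\lg n)$ bits of explicit data have been consumed), so a left-to-right rewrite overruns unread data; symmetrically a right-to-left rewrite stalls when the deltas are concentrated near the end. The ``distributed slack'' you invoke is real globally, but to use it you must physically move $\Theta(n)$ bits out of the way, which is exactly a space-shifting problem, and your plan to handle each jump ``by shifting a constant-size neighborhood of bits'' cannot be right: a single jump may demand $\Theta(n)$ bits of displacement, and there can be $\Theta(n/\lg n)$ jumps, so a per-jump constant-work scheme neither has the room nor the time budget without a further idea. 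You would need to either prove a two-pointer invariant that never fails (it does fail, as above) or give an explicit $O(n)$-time, $O(1)$-space rearrangement argument, and the proposal contains neither.

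The paper sidesteps this entirely by \emph{not} writing the unary stream over the region it is vacating. It observes that in a sorted array all most-significant bits are determined by a single threshold index, so the MSBs of $A[1\twodots \tfrac{2}{3}n]$ are free scratch bits; it writes the unary encoding of the high parts of only the last $n/3$ elements into that separate, already-free $\tfrac{2}{3}n$-bit region (which provably suffices: $\le n/3$ ones plus $\le n/3$ zeros), and then compacts the low parts of the last $n/3$ elements. Because the destination of the variable-length stream is disjoint from the data being read and disjoint from the data being compacted, there is never a collision, and the whole thing is a couple of trivial linear sweeps. If you want to salvage your version, the cleanest repair is to adopt the same trick: reserve a dedicated $O(n)$-bit scratch region (the reconstructible MSBs of a prefix) for the unary stream rather than trying to overwrite the $b$-stream with it.
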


\begin{proof}
One can imagine many representations (b) for which the lemma is
true. We note nonetheless that some care is needed, as some obvious
representations will in fact not lead to in-place encoding. Take for
instance the appealing approach of replacing $A[i]$ by $A[i] -
S[i-1]$, which makes numbers tend to be small (the average value is
$\frac{u}{n}$). Then, one can try to encode the difference using a
code optimized for smaller integers, for example one that represents a
value $x$ using $\lg x + O(\lg\lg x)$ bits. However, the obvious
encoding algorithm will not be in-place: even though the scheme is
guaranteed to save space over the entire array, it is possible for
many large values to cluster at the beginning, leading to a rather
large prefix being in fact expanded. This makes it hard to construct
the encoding in the same space as the original numbers, since we need
to shift a lot of data to the right before we start seeing a space
saving.

As it will turn out, the practical performance of our radix sort is
rather insensitive to the exact space saving achieved here. Thus, we
aim for a representation which makes in-place encoding particularly
easy to implement, sacrificing constant factors in the space saving.

First consider the most significant bit of all integers. Observe that
if we only remember the minimum $i$ such that $A[i] \ge u/2$, we know
all most significant bits (they are zero up to $i$ and one after
that). We will encode the last $n/3$ values in the array more
compactly, and use the most significant bits of $A[1\twodots
\frac{2}{3}n]$ to store a stream of $\frac{2}{3} n$ bits needed by the
encoding.

We now break a number $x$ into $\hi(x)$, containing the upper $\lfloor
\log_2 (n/3) \rfloor$ bits, and $\lo(x)$, with the low $\lg u -
\lfloor \log_2 (n/3) \rfloor$ bits. For all values in $A[\frac{2}{3}n
+ 1 \twodots n]$, we can throw away $\hi(A[i])$ as follows. First we
add $\hi(A[\frac{2}{3} n + 1])$ zeros to the bit stream, followed by a
one; then for every $i= \frac{2}{3}n + 2, \dots, n$ we add $\hi(A[i])
- \hi(A[i-1])$ zeros, followed by a one.  In total, the stream
contains exactly $n/3$ ones (one per element), and exactly $\hi(A[n])
\le n/3$ zeros. Now we simply compact $\lo(A[\frac{2}{3} n + 1]),
\dots, \lo(A[n])$ in one pass, gaining $\frac{n}{3} \lfloor
\log_2 (n/3) \rfloor$ free bits.
\proofend
\end{proof}

\bparagraph{An unstable algorithm.}
Even just this compression observation is enough to give a simple algorithm,
whose only disadvantage is that it is unstable. 
The algorithm has the following structure:

\begin{enumerate*}
\item sort the subsequence $S[1\ldots(n/\log n)]$ using the
  optimal in-place mergesort in \cite{SS87}. 

\item compress $S[1\twodots (n/\lg n)]$ by Lemma~\ref{lem:compress},
  generating $\Omega(n)$ bits of free space.

\item radix sort $S[(n/\log n)+1\ldots n]$ using the free space.

\item uncompress $S[1\twodots (n/\lg n)]$.

\item merge the two sorted sequences $S[1\ldots(n/\log n)]$ and
  $S[(n/\log n)+1\ldots n]$ by using the in-place, linear time merge
  in~\cite{SS87}.
\end{enumerate*}

\noindent
The only problematic step is 3. The implementation of this step is
 based on the cycle leader permuting approach where a
sequence $A$ is re-arranged by following the cycles of a permutation $\pi$.
First $A[1]$ is sent in its
final position $\pi(1)$. Then, the element that was in $\pi(1)$ is
sent to its final position $\pi(\pi(1))$. The process proceeds in this way until
the cycle is closed, that is until the element that is moved in position $1$ is
found. At this point, the elements starting from $A[2]$ are scanned until
a new cycle leader $A[i]$ (i.e. its cycle has not been
walked through) is found, $A[i]$'s cycle is followed in its turn, and so forth. 

To sort, we use $2n^\subrangeexp$
counters $c_1,\ldots,c_{n^\subrangeexp}$ and $d_1,\ldots,d_{n^\subrangeexp}$.
They are stored in the auxiliary words obtained in step 2.
Each $d_j$ is initialized to $0$. With a first scan of the elements, we
store in any $c_i$ the number of occurrences of key $i$. Then, for each
$i=2\ldots n^\subrangeexp$, we set $c_i=c_{i-1}+1$ and finally we set $c_1=1$
(in the end, for any $i$ we have that $c_i=\sum_{j < i} c_j + 1$).
Now we have all the information for the cycle leader process. Letting
$j=(n/\log n)+1$, we proceed as follows:
\begin{enumerate*}
\item[$(i)$] let $i$ be the key of $S[j]$;
\item[$(ii)$] if $c_i\le j<c_{i+1}$ then $S[j]$ is already in its final
  position, hence we increment $j$ by $1$ and go to step $(i)$;
\item[$(iii)$] otherwise, we exchange
  $S[j]$ with $S[c_i+d_i]$, we increment $d_i$ by $1$ and we go to step $(i)$. 
\end{enumerate*}

\noindent
Note that this algorithm is inherently unstable, because we cannot
differentiate elements which should fall between $c_i$ and $c_{i+1} -
1$, given the free space we have.

\bparagraph{Stability through recursion.}
To achieve stability, we need more than $n$ free bits, which we can achieve
by bootstrapping with our own sorting algorithm, instead of merge sort.
There is also an important practical advantage to the new stable approach:
the elements are permuted much more conservatively, resulting in better cache performance.
\begin{enumerate*}
\item recursively sort a constant fraction of the array, say
  $S[1\twodots n/2]$.

\item compress $S[1\twodots n/2]$ by Lemma~\ref{lem:compress},
  generating $\Omega(n\lg n)$ bits of free space.

\item for a small enough constant $\gamma$, break the remaining $n/2$
  elements into chunks of $\gamma n$ numbers. Each chunk is sorted by
  a classic radix sort algorithm which uses the available space.

\item uncompress $S[1\twodots n/2]$.

\item we now have $1 + 1/\gamma = O(1)$ sorted subarrays. We merge
  them in linear time using the stable in-place algorithm of
  \cite{SS87}.
\end{enumerate*}

\noindent
We note that the recursion can in fact be implemented bottom up, so
there is no need for a stack of superconstant space. For the base
case, we can use bubble sort when we are down to $n \le \sqrt{n_0}$
elements, where $n_0$ is the original size of the array at the top
level of the recursion.

Steps 2 and 4 are known to take $O(n)$ time. For step 3, note that
radix sort in base $R$ applied to $N$ numbers requires $N+R$
additional words of space, and takes time $O(N \log_R u)$. Since we
have a free space of $\Omega(n\lg n)$ bits or $\Omega(n)$ words, we
can set $N = R = \gamma n$, for a small enough constant $\gamma$. As
we always have $n = \Omega(\sqrt{n_0}) = u^{\Omega(1)}$, radix sort
will take linear time.

The running time is described by the recursion $T(n) = T(n/2) +
O(n)$, yielding $T(n) = O(n)$.

\bparagraph{A self-contained algorithm.}
Unfortunately, all algorithms so far use in-place stable merging
algorithm as in~\cite{SS87}.
We want to remove this dependence, and obtain a simple and practical sorting algorithm.
By creating free space through compression
at the right times, we can instead use a simple merging implementation that needs
additional space. We first observe the following:

\begin{lemma} \label{lem:merge}
Let $k \ge 2$ and $\alpha > 0$ be arbitrary constants.  Given $k$
sorted lists of $n/k$ elements, and $\alpha n$ words of free space, we
can merge the lists in $O(n)$ time.
\end{lemma}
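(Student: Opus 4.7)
My plan is to perform a standard $k$-way merge using a heap of size $k$ (costing only $O(k)=O(1)$ extra words) and to use the buffer as a staging area for the output. Assume the $\alpha n$ free words are consolidated at one end of the array (achievable in $O(n)$ time by a preliminary compaction); say the array holds $[L_1 L_2 \cdots L_k][F]$, with $F$ the buffer of size $b=\alpha n$.

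The invariant I maintain after $t$ rounds is: the array decomposes as $[C]\,[R_1 R_2 \cdots R_k]\,[F]$, where $C$ is the sorted committed prefix of length $tb$, each $R_i$ is the unconsumed sorted suffix of $L_i$ stored contiguously in the natural order, and $F$ is a $b$-cell free buffer at the tail. A round has three steps. (i) Fill $F$ by heap-driven $k$-way merge from the heads of the $R_i$'s; this takes $O(b)$ time and leaves each $R_i$ with a front hole of some size $h_i$, with $\sum_i h_i = b$. (ii) Coalesce: shift each $R_i$ leftward by the cumulative consumed count $h_1 + \cdots + h_i$, closing all fronts so that the $k$ holes merge into a single empty block $E$ of size $b$ immediately left of $F$; one left-to-right sweep does this in $O(n)$ time. (iii) Cyclically rotate the suffix $[R_1'\cdots R_k'][E][F]$, via the textbook three-reversal trick, to move $F$ to the head of this suffix, thus restoring the invariant with $C$ extended by $b$ and a fresh empty buffer at the tail; this costs $O(n)$ time in $O(1)$ extra space.

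Each round costs $O(n)$, and there are $\lceil n/b\rceil = O(1/\alpha) = O(1)$ rounds since $\alpha$ is constant, so the total running time is $O(n)$. The only real bookkeeping issue is recording the per-list consumed counts $h_1,\ldots,h_k$ during step (i) so that step (ii)'s cumulative shifts can be applied correctly, which costs only $k = O(1)$ words. Everything else is either a classical constant-space subroutine (the heap and the cyclic rotation) or a trivial linear sweep, so no real obstacle arises beyond keeping the pointer arithmetic straight.
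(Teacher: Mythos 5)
Your proof is correct, but it takes a genuinely different route than the paper. The paper partitions all $(1+\alpha)n$ cells into $(k+1)(1+1/\alpha)=O(1)$ equal blocks of size $\alpha n/(k+1)$ and writes merge output directly into whichever block is currently fully free; it argues that at most $k$ input blocks can be partially consumed at any moment, wasting fewer than $\alpha n$ words, so a fully-free block always exists. The output ends up scattered across blocks in a nontrivial order, which is recorded in $O(1)$ words and undone by a final block permutation. Your scheme instead keeps a single contiguous buffer of size $b=\alpha n$ at the tail, and works in $\lceil 1/\alpha\rceil$ rounds of fill, coalesce the $k$ front holes into one empty block by a cumulative left shift, then a three-reversal rotation to append the filled buffer to the committed prefix. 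Both are $O(n)$, but the constants differ: the paper touches each element $O(1)$ times overall, while your coalesce-and-rotate touches each surviving element once per round, i.e., $O(1/\alpha)$ times in total, which is a larger hidden constant when $\alpha$ is small. On the other hand, your invariant ($[C][R_1\cdots R_k][F]$) is arguably simpler, and you avoid the bookkeeping of a block-order permutation entirely. One small point worth stating explicitly in your write-up: in the coalesce sweep the per-list shifts $h_1+\cdots+h_i$ are all nonnegative leftward moves, so processing destinations left to right never clobbers an unread source cell, which is what makes the single in-place pass sound.
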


\begin{proof}
We divide space into blocks of $\alpha n / (k+1)$ words. Initially, we
have $k+1$ free blocks. We start merging the lists, writing the output
in these blocks. Whenever we are out of free blocks, we look for
additional blocks which have become free in the original sorted lists.
In each list, the merging pointer may be inside some block, making it
yet unavailable. However, we can only have $k$ such partially consumed
blocks, accounting for less than $k \frac{\alpha n}{k+1}$ wasted words
of space. Since in total there are $\alpha n$ free words, there must
always be at least one block which is available, and we can direct
further output into it.

At the end, we have the merging of the lists, but the output appears
in a nontrivial order of the blocks. Since there are $(k+1) (1 +
1/\alpha) = O(1)$ blocks in total, we can remember this order using
constant additional space. Then, we can permute the blocks in linear
time, obtaining the true sorted order.
\proofend
\end{proof}

Since we need additional space for merging, we can never work with the
entire array at the same time. However, we can now use a classic
sorting idea, which is often used in introductory algorithms courses
to illustrate recursion (see, e.g.~\cite{CLR}). To sort $n$ numbers,
one can first sort the first $\frac{2}{3} n$ numbers (recursively),
then the last $\frac{2}{3} n$ numbers, and then the first $\frac{2}{3}
n$ numbers again. Though normally this algorithm gives a running time
of $\omega(n^2)$, it works efficiently in our case because we do not
need recursion:

\begin{enumerate*}
\item sort $S[1\twodots n/3]$ recursively.

\item compress $S[1\twodots \frac{n}{3}]$, and sort $S[\frac{n}{3}
  +1 \twodots n]$ as before: first radix sort chunks of $\gamma n$
  numbers, and then merge all chunks by Lemma~\ref{lem:merge} using
  the available space. Finally, uncompress $S[1\twodots \frac{n}{3}]$.

\item compress $S[\frac{2n}{3} + 1\twodots n]$, which is now
  sorted. Using Lemma~\ref{lem:merge}, merge $S[1\twodots
  \frac{n}{3}]$ with $S[\frac{n}{3}+1 \twodots \frac{2n}{3}]$. Finally
  uncompress. 

\item once again, compress $S[1\twodots \frac{n}{3}]$, merge
  $S[\frac{n}{3} + 1 \twodots \frac{2n}{3}]$ with $S[\frac{2n}{3}+1
  \twodots n]$, and uncompress.
\end{enumerate*}

\noindent
Note that steps 2--4 are linear time. Then, we have the recursion
$T(n) = T(n/3) + O(n)$, solving to $T(n) = O(n)$. Finally, we note
that stability of the algorithm follows immediately from stability of
classic radix sort and stability of merging.

\bparagraph{Practical experience.}
The algorithm is surprisingly effective in practice. It can be
implemented in about 150 lines of C code. Experiments with sorting
1-10 million 32-bit numbers on a Pentium machine indicate the
algorithm is roughly 2.5 times slower than radix sort with additional
memory, and slightly faster than quicksort (which is not even stable).

\section{Unstable Sorting for Read-only Keys}
\label{sec:unstable}


\bsubsection{Simulating auxiliary bits}\label{subsubsec:bitstealing}
With the bit stealing technique \cite{Mu86}, a
bit of information is encoded in the relative order of a pair of 
elements with different keys: the pair is maintained in
increasing order to encode a $0$ and vice versa. 
The obvious drawback of this technique is that the cost of accessing a
word of $w$ encoded bits is $\Oh{w}$ in the worst case (no word-level
parallelism). However, if we
modify an encoded word with a series of $l$ increments (or decrements) by
$1$, the total cost of the entire series is $\Oh{l}$ (see~\cite{CLR}). 

To find pairs of distinct elements,
we go from $\sequence$ to a sequence
$Z'Y'XY''Z''$ with two properties. $(i)$ For any $z'\in Z'$, 
$y'\in Y'$, $x\in X$, $y''\in Y''$  and 
$z''\in Z''$ we have that $z'<y'<x<y''<z''$. 
$(ii)$ Let $\memsize=\memfactor\left\lceil n/\log n\right\rceil$, for a
suitable constant $\memfactor$. $Y'$ is composed by the element
$y_\memsize'$ with rank $\memsize$ plus all the other elements equal to
$y_\memsize'$. $Y''$ is composed by the element $y_\memsize''$ with rank
$n-\memsize+1$ plus all the other elements equal to $y_\memsize''$.
To obtain the new sequence we use the in-place, linear time selection and
partitioning algorithms in \cite{KP92,KP94}. 
If $X$ is empty, the task left is to sort
$Z'$ and $Z''$, which can be accomplished with any optimal, in-place
mergesort (e.g. \cite{SS87}.
Let us denote $Z'Y'$ with $M'$ and 
$Y''Z''$ with $M''$. The $\memsize$ pairs of
distinct elements
$(M'[1],M''[1]),(M'[2],M''[2]),\ldots,(M'[\memsize],M''[\memsize])$ will be used
to encode information.

Since the choice of the constant $\memfactor$ does not affect the asymptotic
complexity of the algorithm, we have reduced our problem 
to a problem in which we are allowed to
use a special \emph{bit memory} with $\Oh{n/\log n}$ bits
where each bit can be accessed and modified in constant time but without 
word-level parallelism.

\bsubsection{Simulating auxiliary memory for permuting}
\label{subsubsec:intbuf}
With the internal buffering
technique \cite{kronrod}, some of the elements are
used as placeholders in order to simulate a working area and permute the other
elements at lower cost. In our unstable sorting algorithm we use the
basic idea of internal buffering in the following way.
Using the selection and partitioning algorithms in
\cite{KP92,KP94}, we pass from the original sequence $S$ to $ABC$ with 
two properties. 
$(i)$ For any $a\in A$, 
$b\in B$ and $c\in C$, we have that $a<b<c$. 
$(ii)$ $B$ is
composed of the element
$b'$ with rank $\lceil n/2\rceil$ plus all the other elements equal to
$b'$. We can use $BC$ as an auxiliary memory in
the following way. The element in the first position of $BC$ is the
\emph{separator element} and will not be moved. The elements in the other
positions of $BC$ are placeholders and will be exchanged with (instead of being
overwritten by) elements from $A$ in any way the computation on $A$ (in our case
the sorting of $A$) may require. The ``emptiness'' of any location $i$
of the simulated working area in $BC$ can be tested in $O(1)$ time by comparing
the separator element $BC[1]$ with $BC[i]$: if $BC[1]\le BC[i]$ the $i$th
location is ``empty'' (that is, it contains a placeholder),
otherwise it contains one of the elements in $A$. 

Let us suppose we can sort the elements in $A$ in $O(\card{A})$ time using $BC$
as working area. After $A$ is sorted we use the partitioning algorithm in
\cite{KP92} to separate the elements equal to the separator element ($BC[1]$)
from the elements greater than it (the computation on $A$ may have altered the
original order in $BC$). Then we just re-apply the same process to $C$, that
is we divide it into $A'B'C'$, we sort $A'$ using $B'C'$ as working area and so
forth. Clearly, this process requires $O(n)$ time and when it
terminates the elements are sorted.  
Obviously, we can divide $A$ into 
$p=\Oh{1}$ equally sized subsequences $A_1,A_2\ldots A_p$, then sort
each one of them using $BC$ as working area and finally fuse
them using the in-place, linear time merging algorithm in \cite{SS87}.
Since the choice of the constant $p$ does not affect the asymptotic
complexity of the whole process, we have reduced
our problem 
 to a new problem, in which we
are allowed to use a special  \emph{exchange} memory of $\Oh{n}$ locations,
where each location can contain \emph{input elements only} (no integers or any
other kind of data). Any element can be moved to and from any location of
the exchange memory in $O(1)$ time.

\bsubsection{The reduced problem}\label{subsubsec:reduced}
By blending together the basic techniques seen above, we can focus on a
reduced problem 
in which assumption~\ref{prob:main:cond:memory} is replaced by:
\textit{%
\begin{enumroman}
\item[\ref{prob:main:cond:memory}] Only $O(1)$ words of
normal auxiliary memory and two kinds of special auxiliary
memory are allowed: 
\begin{enumalpha}
\item A random access \emph{bit memory} $\mathscr{B}$ with
$\Oh{n/\log n}$ bits, where each bit can be accessed in $O(1)$ time
(no word-level parallelism).
\item A random access \emph{exchange memory} $\mathscr{E}$ with $\Oh{n}$
locations, where each location can contain \emph{only elements from $S$} and
they can be moved to and from any location of $\mathscr{E}$
in $O(1)$ time.
\end{enumalpha}
\end{enumroman}
}

\noindent{}If we can solve the reduced problem in $\Oh{n}$ time we
can also solve the original problem 
with the same asymptotic complexity.
However, the resulting algorithm will be 
unstable because of the use of the internal buffering technique with a large
pool of placeholder elements.

\bsubsection{The naive approach}\label{subsubsec:naive}
Despite the two special
auxiliary memories, solving the reduced problem is not easy. 
Let us consider the following naive approach.
We proceed as in the normal bucket sorting: one bucket for each one of the
$n^\subrangeexp$ range values. Each bucket is a linked list: the input elements
of each bucket are maintained in $\mathscr{E}$ while its auxiliary data (e.g.
the pointers of the list) are maintained in $\mathscr{B}$. In order to amortize
the cost of updating the auxiliary data (each pointer requires a word of 
$\Thetah{\log n}$ bits and $\mathscr{B}$ does not have word-level
parallelism), each bucket is a linked list of \emph{slabs} of
$\Thetah{\log^2 n}$ elements each
($\mathscr{B}$ has only $\Oh{n/\log n}$ bits). At any time each bucket has a
partially full \emph{head slab} which is where any new element of
the bucket is stored. Hence, for each bucket we need to store in $\mathscr{B}$ a
word of $\Oh{\log\log n}$ bits with the position in the head slab of the
last element added. The algorithm proceeds as usual: each element in $S$ is 
sent to its bucket in $\Oh{1}$ time and is inserted in the bucket's head slab.
With no word-level parallelism in $\mathscr{B}$ the insertion in the head
slab requires $\Oh{\log\log n}$ time. Therefore, we have an 
$O(n\log\log n)$ time solution for the reduced problem and, consequently, an
unstable $O(n\log\log n)$ time solution for the original problem.

This simple strategy can be improved by dividing the head slab of
a bucket into \emph{second level slabs} of $\Thetah{\log\log n}$ elements
each. As for the first level slabs, there is a partially full, second
level head slab.
For any bucket we maintain two words in $\mathscr{B}$: the first one
has $\Oh{\log\log\log n}$ bits and stores the position of the last element
inserted in the second level head slab; the second one has 
$\Oh{\log\log n}$ bits and stores the position of the last full slab of second
level contained in the first level head slab. Clearly, this gives
us an $O(n\log\log\log n)$ time solution for the reduced problem and the
corresponding unstable solution
for the original problem. 
By generalizing this
approach to the extreme, we end up with $\Oh{\log^* n}$ levels of slabs, an 
$\Oh{n\log^* n}$ time solution for the reduced problem and the related unstable
solution for the original problem. 

\bsubsection{The pseudo pointers}\label{subsubsec:pseudo}
Unlike bit stealing and internal buffering which were known earlier, 
the pseudo pointers technique has been 
specifically designed for improving the space complexity in integer sorting problems.
Basically, in this technique a set of elements
with distinct keys is used as a pool of pre-set, read-only pointers in order to
simulate efficiently traversable and updatable linked lists.  
Let us show how to use this basic idea in a particular procedure that will be at
the core of our optimal solution for the reduced problem.

Let $\distel$ be the number of distinct keys in $S$. We are given
two sets of $\distel$ input elements with distinct keys: the sets $\mathcal{G}$
and $\mathcal{P}$ of \emph{guides} and \emph{pseudo pointers},
respectively. The guides are given us \emph{in sorted
order} while the pseudo pointers form a sequence in arbitrary
order. Finally, we are given a multiset $\mathcal{I}$ of $\distel$ input
elements (i.e. two elements of $\mathcal{I}$ can have equal keys). The procedure
uses the guides, the pseudo pointers and the exchange memory to
sort the $\distel$ input elements of $\mathcal{I}$ in $\Oh{\distel}$ time. 

We use three groups of contiguous locations in the exchange memory
$\mathscr{E}$. The first group $H$ has $n^\subrangeexp$ locations (one for each
possible value of the keys). The second group $L$ has $n^\subrangeexp$
\emph{slots} of two adjacent locations each. The last group $R$ has $\distel$
locations, the elements of $\mathcal{I}$ will end up here in sorted order.
$H$, $L$ and $R$ are initially empty. We have two main steps.

\inlinestep{First}. 
For each $s\in\mathcal{I}$, we proceed as follows. Let $p$ be the leftmost pseudo pointer
still in $\mathcal{P}$. If the
$s$th location of $H$ is empty, we move $p$ from $\mathcal{P}$ to
$H[s]$ and then we move $s$ from $\mathcal{I}$ to the first location of $L[p]$
(i.e. the first location of the $p$th slot of $L$) leaving the second location
of $L[p]$ empty.
Otherwise, if $H[s]$ contains an element $p'$ (a pseudo pointer)
we move $s$ from
$\mathcal{I}$ to the first location of $L[p]$, then we move $p'$ from $H[s]$ to
the second location of $L[p]$ and finally we move $p$ from $\mathcal{P}$ to
$H[s]$. 

\inlinestep{Second}. We scan the guides in $\mathcal{G}$ from the
smallest to the largest one. 
For a guide $g\in\mathcal{G}$ we proceed as follows. If the $g$th
location of $H$ is empty then there does not exist any element equal to $g$
among 
the ones to be sorted (and initially in $\mathcal{I}$) and hence we move to
the next guide. Otherwise, if $H[G]$ contains a pseudo pointer $p$,
there is at least one element equal to $g$ among the ones to be sorted and this
element is currently stored in the first location of the $p$th slot of $L$.
Hence, we move that element from the first location of $L[p]$ to the leftmost
empty location of $R$. After that, if the second location of $L[p]$ contains a
pseudo pointer $p'$, there is another element equal to $g$ and we
proceed in the same fashion. Otherwise, if the second location of $L[p]$ is
empty then there are no more elements equal to $g$ among the ones to be sorted
and therefore we can focus on the next guide element. 

Basically, the procedure is bucket sorting where the auxiliary data of the
list associated to each bucket (i.e. the links among elements in the list)
\emph{is implemented by pseudo pointers in $\mathcal{P}$} instead of
storing it explicitly in the bit memory (which lacks of word-level parallelism
and is inefficient in access). It is worth noting that the buckets' lists 
implemented with pseudo pointers are spread over an area that is larger
than the one we would obtain with explicit pointers (that is because each
pseudo pointer has a key of $\log n^\subrangeexp$ bits while an
explicit pointer would have only $\log \distel$ bits).  

\bsubsection{The optimal solution}\label{subsubsec:algo}
We can now describe the algorithm, which has three main steps.

\inlinestep{First}. Let us assume that for any element $s\in S$ there is at
least another element with the same key. (Otherwise, we can easily reduce
to this case in linear time: we isolate the $\Oh{n^\subrangeexp}$ elements that
do not respect the property, we sort them with the in-place
mergesort in \cite{SS87} and finally we merge them after the other $O(n)$
elements are sorted.) With this assumption, we extract from $S$ two sets  
$\mathcal{G}$ and $\mathcal{P}$ of $\distel$ input elements with distinct keys 
(this can be easily achieved in $\Oh{n}$ time using only the exchange memory
$\mathscr{E}$). Finally we sort $\mathcal{G}$ with the optimal in-place
mergesort in \cite{SS87}. 
  
\inlinestep{Second}. Let $S'$ be the sequence with the ($\Oh{n}$) input elements
left after the first step. Using the procedure in \S~\ref{subsubsec:pseudo}
(clearly, the elements in the sets $\mathcal{G}$ and $\mathcal{P}$ computed in
the first step will be the guides and pseudo pointers used in the procedure), we
sort each block $B_i$ of $S'$ with $\distel$ contiguous elements. After that,
let us focus on the first $\distfan=\Thetah{\log\log n}$ consecutive blocks
$B_1,B_2,\ldots,B_t$. We distribute the elements of these blocks into 
$\le \distfan$ \emph{groups} $G_1,G_2\ldots$ in the following way. Each group
$G_j$ can contain between $\distel$ and $2\distel$ elements and is allocated in
the exchange memory $\mathscr{E}$. The largest element in a group is its
\emph{pivot}. The number of elements in a group is stored in a word of
$\Thetah{\log\distel}$ bits
allocated in the bit memory $\mathscr{B}$. Initially there is only one group and
is empty. In the $i$th step of the distribution we scan the 
elements of the $i$th block $B_i$. As long as the elements of $B_i$ are less
than or equal to the pivot of the first group we move them into it. If, during
the process, the group becomes full, we select its median element and partition
the group into two new groups (using the selection and
partitioning algorithms in \cite{KP92,KP94}). When, during the scan, the
elements of $B_i$ become greater than the pivot of the first group, we move to
the second group and continue in the same fashion. It is important to notice
that the number of elements in a group (stored in a word of
$\Thetah{\log\distel}$ bits in the bit memory $\mathscr{B}$) is updated by
increments by $1$ (and hence the total cost of updating the number of elements
in any group is linear in the final number of elements in that
group, see~\cite{CLR}). Finally, when all the elements of the first 
$\distfan=\Thetah{\log\log n}$ consecutive blocks
$B_1,B_2,\ldots,B_t$ have been distributed into groups, we sort each group using
the procedure in \S~\ref{subsubsec:pseudo} (when a group has more than
$\distel$ elements, we sort them in two batches and then merge them with the
in-place, linear time merging in \cite{SS87}). The whole process is repeated for
the second $\distfan=\Thetah{\log\log n}$ consecutive blocks, and so forth.

\inlinestep{Third}. After the second step, the sequence $S'$ (which contains all
the elements of $S$ with the exclusion of the guides and pseudo pointers, see
the first step) is composed by contiguous subsequences $S'_1,S'_2,\ldots$ which
are \emph{sorted} and contain 
$\Theta(\distel\log\log n)$ elements each (where $\distel$ is the number of
distinct elements in $S$). Hence, if we see $S'$ as composed by
contiguous runs of elements with the same key, we can conclude that the number
of runs of $S'$ is $\Oh{n/\log\log n}$. Therefore $S'$ can be sorted in
$\Oh{n}$ time using the naive approach described in
\S~\ref{subsubsec:naive} with only the following simple modification. 
As long as we are inserting the elements of a single run in a bucket,
we maintain the position of the last element inserted in the head slab of the
bucket in a word of auxiliary memory (we can use $O(1)$ of them) instead of
accessing the inefficient bit memory $\mathscr{B}$ at any single insertion. When
the current run is finally exhausted, we copy the position in the bit memory.
Finally, we sort $\mathcal{P}$  and we merge $\mathcal{P}$,
$\mathcal{A}$ and $S'$ (once again, using the sorting and merging algorithms in
\cite{SS87}).

\bsubsection{Discussion: Stability and Read-only Keys}\label{subsec:reasons}

Let us focus on the reasons why the algorithm of this section 
is not stable. 
The major cause of
instability is the use of the basic 
internal buffering technique in
conjunction with large ($\omega(\polylog{n})$) pools of placeholder
elements. This is clearly visible even in the first
iteration of the process in \S~\ref{subsubsec:intbuf}: 
after being used to permute $A$ into
sorted order, the placeholder elements in $BC$ are left permuted in a
completely arbitrary way and their initial order is lost.

\section{Reducing Space in any RAM Sorting Algorithm}   \label{sec:RAM}

In this section, we consider the case of sorting integers of $w =
\omega(\lg n)$ bits. We show a black box transformation from any
sorting algorithm on the RAM to a stable sorting algorithm with the
same time bounds which only uses $O(1)$ words of additional space. Our
reduction needs to modify keys. Furthermore, it requires randomization
for large values of $w$.

We first remark that an algorithm that runs in time $t(n)$ can only
use $O(t(n))$ words of space in most realistic models of
computation. In models where the algorithm is allowed to write $t(n)$
arbitrary words in a larger memory space, the space can also be
reduced to $O(t(n))$ by introducing randomization, and storing the
memory cells in a hash table.

\paragraph{Small word size.}
We first deal with the case $w = \polylog{n}$. The algorithm has the
following structure:

\begin{enumerate*}
\item sort $S[1\twodots n/\lg n]$ using in-place stable merge sort
  \cite{SS87}. Compress these elements by Lemma \ref{lem:compress}
  gaining $\Omega(n)$ bits of space.

\item since $t(n) = O(n\lg n)$, the RAM sorting algorithm uses at most
  $O(t(n) \cdot w) = O(n\polylog{n})$ bits of space. Then we can break
  the array into chunks of $n / \lg^c n$ elements, and sort each one
  using the available space.

\item merge the $\lg^c n$ sorted subarrays.

\item uncompress $S[1\twodots n/\lg n]$ and merge with the rest of
  the array by stable in-place merging~\cite{SS87}.
\end{enumerate*}

\noindent
Steps 1 and 4 take linear time. Step 2 requires $\lg^c n \cdot t(n /
\lg^c n) = O(t(n))$ because $t(n)$ is convex and bounded in $[n, n\lg
  n]$. We note that step 2 can always be made stable, since we can
afford a label of $O(\lg n)$ bits per value.

It remains to show that step 3 can be implemented in $O(n)$ time. In
fact, this is a combination of the merging technique from
Lemma~\ref{lem:merge} with an atomic heap \cite{fredman94atomic}.
The atomic heap can maintain a priority queue over $\polylog{n}$
elements with constant time per insert and extract-min. Thus, we can
merge $\lg^c n$ lists with constant time per element. The atomic heap
can be made stable by adding a label of $c\lg\lg n$ bits for each
element in the heap, which we have space for. The merging of
Lemma~\ref{lem:merge} requires that we keep track of $O(k/\alpha)$
subarrays, where $k=\lg^c n$ was the number of lists and $\alpha =
1/\polylog{n}$ is fraction of additional space we have
available. Fortunately, this is only $\polylog{n}$ values to record,
which we can afford.

\paragraph{Large word size.}
For word size $w \ge \lg^{1+\varepsilon} n$, the randomized algorithm
of \cite{A} can sort in $O(n)$ time. Since this is
the best bound one can hope for, it suffices to make this particular
algorithm in-place, rather than give a black-box transformation.
We use the same algorithm from above. The only challenge is to make
step 2 work: sort $n$ keys with $O(n\polylog{n})$ space, even if the
keys have $w > \polylog{n}$ bits.

We may assume $w \ge \lg^3 n$, which simplifies the algorithm
of~\cite{A} to two stages. In the first stage, a
signature of $O(\lg^2 n)$ bits is generated for each input value
(through hashing), and these signatures are sorted in linear
time. Since we are working with $O(\lg^2 n)$-bit keys regardless of
the original $w$, this part needs $O(n\polylog{n})$ bits of space, and
it can be handled as above.

From the sorted signatures, an additional pass extracts a subkey of
$w/\lg n$ bits from each input value. Then, these subkeys are sorted
in linear time. Finally, the order of the original keys is determined
from the sorted subkeys and the sorted signatures. 

To reduce the space in this stage, we first note that the algorithm
for extracting subkeys does not require additional space. We can
then isolate the subkey from the rest of the key, using shifts, and
group subkeys and the remainder of each key in separate arrays, taking
linear time. This way, by extracting the subkeys instead of copying
them we require no extra space. We now note that the algorithm
in~\cite{A} for sorting the subkeys also does not
require additional space. At the end, we recompose the keys by
applying the inverse permutation to the subkeys, and shifting them
back into the keys.

Finally, sorting the original keys only requires knowledge of the
signatures and \emph{order} information about the subkeys. Thus, it
requires $O(n \polylog{n})$ bits of space, which we have. At the end, we
find the sorted order of the original keys and we can implement the
permutation in linear time.

\iffull

\section{Stable sorting for read-only keys}\label{sec:stable}

In the following we will denote $n^\subrangeexp$
with $\subrange$. Before we begin, let us recall that two consecutive sequences
$X$ and $Y$,
possibly of different sizes, can be exchanged stably, in-place and in linear
time with three sequence reversals, since $YX=(X^RY^R)^R$. 
Let us give a short overview of the algorithm. We have three phases.

\bparagraph{Preliminary phase (\S~\ref{subsec:prelim}).} 
The purpose of this phase is to obtain some collections of elements
to be used with the three techniques described in
\S~\ref{sec:unstable}. We extract 
 $\Thetah{n/\log n}$ smallest and largest elements of
$S$. They will form an encoded memory of $\Thetah{n/\log n}$ bits.
Then, we extract from the remaining sequence $\Thetah{n^\subrangeexp}$ smallest
elements and divide them into $O(1)$ \emph{jump zones} of equal
length.
After that, we extract from the remaining sequence some 
equally sized sets of distinct elements. Each set is collected into a contiguous
zone. At the end of the phase, we have \emph{guide}, \emph{distribution},
\emph{pointer}, \emph{head} and \emph{spare} zones. 

\bparagraph{Aggregating phase (\S~\ref{subsec:aggr}).}
After the preliminary phase, we have reduced the problem 
to sorting a smaller sequence $\sequencep$ (still of $\Oh{n}$
size) using various sequences built to be used with the
basic techniques in \S~\ref{sec:unstable}.  
Let $\distel$ be the number of distinct elements in $\sequencep$ 
(computed during the preliminary phase). 
The objective of this phase
is to sort each subsequence of size $\Thetah{\distel\polylog{n}}$ of
the main
sequence $\sequencep$. For any such subsequence $\sequencep_l$, we first find a
set of pivots and then sort $\sequencep_l$ with a distributive
approach. The guide zone is
sorted and is used to retrieve in sorted order lists of equal elements produced
by the distribution. The distribution zone provides sets of pivots elements
that are progressively moved into one of the spare zones. The head zone
furnishes placeholder elements for the distributive processes.
The distributive process depends on the hypothesis that each
$\distel$ contiguous elements of $\sequencep_l$ are sorted. The algorithm for
sorting $\Thetah{\distel}$ contiguous elements stably, in $\Oh{\distel}$ 
time and $\Oh{1}$ space (see \S~\ref{subsubsec:dist})
employs the pseudo pointers technique and the guide, jump, pointer and spare
zones are crucial in this process.

\bparagraph{Final phase (\S~\ref{subsec:final}).} 
After the aggregating phase the main sequence $\sequencep$ has all its
subsequences of $\Thetah{\distel\polylog{n}}$ elements in sorted order. With an
iterative merging process, we obtain from $\sequencep$ two new sequences: a
small sequence containing $\Oh{\distel\log^2 n}$ sorted elements; a large
sequence still containing $\Oh{n}$ elements but with an important
property: the
length of any subsequence of equal elements is multiple of a suitable number
$\Thetah{\log^2 n}$. By exploiting its property, the large sequence is sorted
using the encoded memory and merged with the small one. Finally, we
take care of all the zones built in the preliminary phase. Since they
have sizes either $\Oh{n/\log n}$ or $\Oh{n^\subrangeexp}$, they can be
easily sorted within our target bounds.

\subsection{Preliminary Phase}\label{subsec:prelim}
The preliminary phase has two main steps described in
Sections~\ref{subsubsec:enc}
and \ref{subsubsec:zones}.

\lsubsubsection{Encoded memory}\label{subsubsec:enc}
We start by collecting some pairs of distinct elements.
We go from $\sequence$ to a sequence
$Z'Y'XY''Z''$ with the same two properties we saw in
\S~\ref{subsubsec:bitstealing}. We use the linear time selection and
partitioning in \cite{KP92,KP94} which are also stable. 
Let us maintain the same notations used in \S~\ref{subsubsec:bitstealing}. 
We end up with $\memsize$ pairs of
distinct elements
$(M'[1],M''[1]),\ldots,(M'[\memsize],M''[\memsize])$ 
to encode information by bit stealing.
We use the encoded memory based on these pairs as if it were actual
memory (that is, we will allocate arrays, we will index and modify entries of
these arrays, etc).
However, in order not to lose track of the costs, the
names of encoded structures will always
be written in the following way: $\enca{I}$, $\enca{U}$, etc.

We allocate two
arrays $\enca{I}_{bg}$
and $\enca{I}_{en}$, each one with $\subrange=n^\subrangeexp$ entries of $1$ bit
each. The entries of both arrays are set to $0$.
$\enca{I}_{bg}$ and $\enca{I}_{en}$ will be used each time the procedure in
\S~\ref{subsubsec:dist} of the aggregating phase is invoked.

\lsubsubsection{Jump, guide, distribution, pointer,  
head  and spare zones}\label{subsubsec:zones}
The second main step of this phase has six sub-steps.

\inlinestep{First}. Let us suppose $\card{X}>n/\log n$ (otherwise
we sort it with the mergesort in \cite{SS87}). 
Using the selection and partitioning in \cite{KP92,KP94}, we
go from $X$ to $JX'$ such that $J$ is composed by the element $j^*$ 
with rank $3\subrange+1=3n^\subrangeexp+1$ (in $X$) plus all the elements
(in $X$) $\le j^*$. Then, we move the rightmost
element
equal to $j^*$ in the last 
position of $J$ (easily done in $\Oh{\card{J}}$ and stably with a sequence
exchange). 

\inlinestep{Second}. Let us suppose $\card{X'}>n/\log n$. 
With this step and the next one we separate the elements which appear more
than $7$ times. Let us allocate in
our encoded memory of $\memsize=\Oh{n/\log n}$ bits an encoded array $\enca{I}$
with $\subrange$ ($=n^\subrangeexp$) entries of
$4$ bits each. All the entries are initially set to $0$. Then, we start
scanning $X'$ from left to right. For any element $u\in X'$ accessed during the
scan, if $\enca{I}[u]\le 7$, we increment $\enca{I}[u]$ by one. 

\inlinestep{Third}. We scan $X'$ again. Let $u\in X'$
be the $i$th element accessed. If
$\enca{I}[u]<7$, we decrement $\enca{I}[u]$ by $1$ and exchange
$X'[i]$ ($=u$) with $J[i]$. At the end of the scan we have that
$J=WJ''$, where $W$ contains the elements of $X'$ occurring less than $7$ times
in $X'$. Then, we have to gather the elements previously in $J$ and now
scattered in $X'$. We accomplish this with the partitioning in
\cite{KP92}, using $J[\card{J}]$ to discern between elements previously in $J$
and the ones belonging to $X'$ (we know that $J[\card{J}]$ is equal to
$j^*$ and, for any $j\in J$ and any $x'\in X'$, $j\le J[\card{J}]<x'$).
After that we have $WJ''J'X''$ where the elements in $J'$ are the ones
previously in $J$ and exchanged during the scan of $X'$. 
We exchange $W$ with $J'$ ending up with $JWX''$.

\inlinestep{Fourth}. We know that each element of $X''$
occurs at least $7$ times in it. We also know that the entries of $\enca{I}$
encode either $0$
or $7$. 
We scan $X''$ from left to right. Let $u\in
X''$ be the $i$th element accessed. If
$\enca{I}[u]=7$, we decrement $\enca{I}[u]$ by one and we exchange
$X''[i]$ ($=u$) with $J[i]$. After the scan we have that
$J=GJ'''$, where, for any $j$, $G[j]$ was the leftmost occurrence of its kind in
$X''$ (before the scan). Then, we sort $G$ with the mergesort
in \cite{SS87} ($\card{G}=\Oh{\subrange}=\Oh{n^\subrangeexp}$ and
$\subrangeexp<1$). After that, similarly to the
third step, we gather the elements previously in $J$ and now scattered in
$X''$ because of the scan. We end up with the sequence $JWGX'''$.
We repeat the same process (only testing for $\enca{I}[u]$ equal to $6$
instead of $7$) to gather the leftmost occurrence of each
distinct element in $X'''$ into a zone $D$, ending up with the sequence
$JWGDX''''$.

\inlinestep{Fifth}. Each element of $X''''$ occurs at
least $5$ times in it and the entries of $\enca{I}$ encode either $0$
or $5$.
We scan $X''''$, let $u\in
X''''$ be the $i$th element accessed. If
$\enca{I}[u]=5$, we decrement $\enca{I}[u]$ by $1$ and exchange
$X''''[i]$ ($=u$) with $J[i]$. After the scan we have that
$J=PJ'''$, where, for any $j$, $P[j]$ was the leftmost occurrence of its kind in
$X''''$ (before the scan). Unlike the fourth step, we do not sort $P$.
We repeat the process finding $T_1$, $T_2$, $T_3$ and $H$ containing
the second, third, fourth and fifth leftmost occurrence of each distinct element
in $X''''$, respectively. After any of these processes, we gather back the
elements previously in $J$ scattered in $X''''$ (same
technique used in the third and fourth steps). 
We end up with the sequence
$JWGDPT_1T_2T_3H\sequencep$.

\inlinestep{Sixth}. Let us divide $J$ into $J_1J_2J_3V$, where
$\card{J_1}=\card{J_2}=\card{J_3}=\subrange$ and
$\card{V}=\card{J}-3\subrange$. We scan $G$, let $u\in G$ be the
$i$th element accessed, we exchange $T_1[i]$,
$T_2[i]$ and $T_3[i]$ with $J_1[u]$, $J_2[u]$ and $J_3[u]$, respectively.

\lsubsubsection{Summing up}
We will refer to $J_1$, $J_2$ and $J_3$ as \emph{jump zones}. Zone $G$, $D$, $P$
and $H$ will be referred to as \emph{guide}, \emph{distribution}, 
\emph{pointer} and \emph{head zones}, respectively. Finally, $T_1$, $T_2$ and
$T_3$ will be called \emph{spare zones}.
With the preliminary phase we have passed from the initial sequence
$\sequence$ to $M'J_1J_2J_3VWGDPT_1T_2T_3H\sequencep M''$. 
We allocated in
the encoded memory two arrays $\enca{I}_{bg}$ and $\enca{I}_{en}$.
The encoded memory, $\enca{I}_{bg}$ and $\enca{I}_{en}$, and the
jump, guide, distribution, pointer, head and spare zones
will be used in the next two phases to sort the $\sequencep$.
Zones $V$ and $W$ are a byproduct of the 
phase and will not have an active role in the sorting of $\sequencep$. 
The number of
distinct elements in sequence $\sequencep$ is less than or equal to the sizes of
guide, distribution, pointer and head zones. For the rest of the
paper we will denote $\card{G}$ ($=\card{D}=\card{P}=\card{H}$) with $\distel$.

\begin{lemma}\label{lem:prelim}
The preliminary phase requires $O(n)$ time, uses $O(1)$
auxiliary words and is stable.
\end{lemma}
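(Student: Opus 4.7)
The plan is to verify the three claimed properties—linear time, $O(1)$ auxiliary words, and stability—by walking through each sub-step of the preliminary phase and bounding its cost and auxiliary-space use. The guiding principle is that every primitive invoked (stable selection and partitioning from \cite{KP92,KP94}, in-place stable mergesort from \cite{SS87}, and block exchange via three reversals) runs in linear time and uses $O(1)$ extra words while preserving stability; so the proof reduces largely to accounting.

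First I would handle the encoded-memory setup of \S~\ref{subsubsec:enc}. By the same analysis as \S~\ref{subsubsec:bitstealing}, collecting the $\memsize$ pairs $(M'[i],M''[i])$ amounts to a constant number of stable selections and partitions on $\sequence$, for a total of $\Oh{n}$ time with $\Oh{1}$ extra words. The encoded arrays $\enca{I}_{bg}$ and $\enca{I}_{en}$ each use only $\subrange=n^\subrangeexp$ bits, safely within the $\Thetah{n/\log n}$-bit budget; initializing them touches $\Oh{n^\subrangeexp}$ bits in $\Oh{1}$ time per bit, for a cost of $o(n)$.

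Next, for the six sub-steps of \S~\ref{subsubsec:zones}, each performs only a constant number of operations from the permitted primitive set: a stable selection/partition to split off a prefix such as $J$; a left-to-right scan of $X'$, $X''$, $X'''$, or $X''''$ that updates $4$-bit entries of the encoded array $\enca{I}$; and a gather-back via a single stable partition using the separator $j^*$ (or its analogue) on the already-identified prefix of $J$. Because $\enca{I}$ stores only $\subrange$ entries of four bits, each read, test, and increment/decrement of $\enca{I}[u]$ touches $\Oh{1}$ pairs of $M',M''$ and therefore costs $\Oh{1}$ per scanned element, for $\Oh{n}$ summed over all scans. The exchanges between $X'[i]$ (resp.\ $X''[i]$, etc.) and $J[i]$ are constant time per element. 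The mergesort of $G$ in the fourth sub-step operates on at most $\Oh{n^\subrangeexp}$ elements and so takes $\Oh{n^\subrangeexp\log n}=o(n)$ time, and the sixth sub-step is a single scan of $G$ with three constant-time swaps per element. All auxiliary state fits in $\Oh{1}$ normal words (loop indices, pointers into $J$, separator value), apart from the bit memory which is not counted.

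The main obstacle will be the stability argument rather than the time/space count. Stability requires that in the final sequence $M'J_1J_2J_3VWGDPT_1T_2T_3H\sequencep M''$, the residual subsequence $\sequencep$ preserves, within each class of equal keys, the relative order those elements had in $\sequence$. I would argue this inductively. The initial partition into $Z'Y'XY''Z''$ and the further split $JX'$ are stable, so the relative order within $X'$ matches that of its projection in $\sequence$. Each subsequent scan visits $X'$ (or $X'',X''',X''''$) left-to-right and, whenever it extracts an element, removes the currently leftmost surviving occurrence of that key; what remains of the scanned sequence therefore retains the relative order of equal keys from $\sequence$, which is exactly what $\sequencep$ inherits. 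The gather-back after each scan is a stable partition with separator equal to $j^*$, which lies strictly below every surviving key in $X'$ (and its successors), so reassembling $J$ on the left does not disturb the relative order of elements to its right. Finally, the block exchange swapping $W$ and $J'$, and the rearrangement of $J$ into $J_1J_2J_3V$ in the sixth sub-step, act on disjoint blocks already separated from $\sequencep$ and therefore do not affect the relative order inside $\sequencep$. Collecting these observations yields the lemma.
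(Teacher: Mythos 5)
The paper states Lemma~\ref{lem:prelim} without an explicit proof; it is treated as an immediate consequence of the construction in \S~\ref{subsubsec:enc} and \S~\ref{subsubsec:zones}, each sub-step of which is a scan, a stable selection/partitioning from \cite{KP92,KP94}, an in-place stable mergesort from \cite{SS87} on $\Oh{n^\subrangeexp}$ elements, or a block exchange. Your proof makes that implicit verification explicit, and it reads correctly: the cost accounting (linear scans, $\Oh{1}$-bit accesses to $4$-bit entries of $\enca{I}$, $o(n)$ for the $\Oh{n^\subrangeexp\log n}$ mergesort on $G$, $\Oh{1}$ normal words aside from the bit memory) matches what the paper's routine-by-routine description entails, and your stability argument---that each scan removes either all copies of a rare key into $W$ or the currently leftmost surviving occurrence of each frequent key, and that the gather-back partitions use the separator $j^*$ which is strictly smaller than every surviving element---is the right observation. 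One place where you could be slightly more careful is to state not only that $\sequencep$ preserves relative order within key classes, but also that each extracted zone ($W$, $G$, $D$, $P$, $T_1$, $T_2$, $T_3$, $H$) preserves it (trivially for the distinct-key zones, and because the scans are left-to-right for $W$), and that the zones record which occurrence number (first, second, $\ldots$, seventh) they hold, since the final phase relies on this to reassemble a globally stable order. This is implicit in your wording (``Collecting these observations yields the lemma'') but worth surfacing since ``the preliminary phase is stable'' is really a claim about the whole displacement, not just about $\sequencep$.
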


\subsection{Aggregating Phase}\label{subsec:aggr}
Let us divide $\sequencep$ into $\numaggr$ subsequences
$\sequencep_1\sequencep_2\ldots\sequencep_\numaggr$ with
$\card{\sequencep_i}=\distel\log^\polyexp n$, for a suitable constant
$\polyexp\ge 4$. 
Let $\distfan=\log^\polydist n$, for a suitable constant $\polydist<1$. 
We will assume that $\distel\ge(2\distfan+1)\log\card{\sequencep_i}$. We leave
the particular case where $\distel< (2\distfan+1)\log\card{\sequencep_i}$ for
the full paper. 
For a generic $1\le l\le \numaggr$, let us assume that any $\sequencep_{l'}$
with $l'<l$ has been sorted and that $H$ is next to the
left end of $\sequencep_l$. To sort $\sequencep_l$ we have 
two main steps described in \S~\ref{subsubsec:pivots} and
\S~\ref{subsubsec:sort}. 
They rely on the algorithm described in \S~\ref{subsubsec:dist}.

\lsubsubsection{Sorting $\Oh{\distel}$ contiguous
elements}\label{subsubsec:dist}
Let us show how to exploit the two arrays $\enca{I}_{bg}$ and
$\enca{I}_{en}$, (in the encoded memory in the preliminary phase)
and the jump, guide and pointer zones to sort a sequence $A$, with
$\card{A}\le\distel$, stably in $\Oh{\card{A}}$ time and using $\Oh{1}$
auxiliary words. The process
has two steps.

\inlinestep{First}. We scan $A$, let $u\in A$ be the
$i$th element accessed. Let $p=P[i]$ and $h=J_1[u]$. 
If $\enca{I}_{bg}[u]=0$, we set both $\enca{I}_{bg}[u]$ and
$\enca{I}_{en}[p]$ to $1$. 
In any case, we exchange $J_1[u]$ ($=h$) with $J_2[p]$ and $A[i]$ 
($=u$) with $J_3[p]$. Then, we exchange $P[i]$ ($=p$) with $J_1[u]$
(which is not $h$ anymore).  

\inlinestep{Second}. Let $j=\card{A}$. We
scan $G$, let $g\in G$ be the
$i$th element accessed. If $\enca{I}_{bg}[g]=0$, we do
nothing. Otherwise, let $p$ be $J_1[g]$, we set $\enca{I}_{bg}[g]=0$ and
execute the following three steps. 
$(i)$ We exchange $J_3[p]$ with $A[j]$, then $J_1[g]$ with $P[j]$
and finally $J_1[g]$ with $J_2[p]$. $(ii)$ We decrease $j$ by $1$.
$(iii)$ If $\enca{I}_{en}[p]=1$, we set $\enca{I}_{en}[p]=0$ and the
sub-process ends, otherwise, let $p$ be $J_1[g+1]$, and we go to $(i)$.

Let us remark that the $\Oh{\card{A}}$ entries of 
$\enca{I}_{bg}$ and $\enca{I}_{en}$  that are changed from $0$ to $1$ in the
first step, are set back to $0$ in the second one. 
$\enca{I}_{bg}$ and
$\enca{I}_{en}$ have been initialized in the preliminary phase.
We could not afford to re-initialize them every time we
invoke the process (they have $r=n^\subrangeexp$
entries and $\card{A}$ may be $\oh{n^\subrangeexp}$).

\begin{lemma}\label{lem:dist}
Using the encoded arrays $\enca{I}_{bg}$, $\enca{I}_{en}$ and the jump, guide
and pointer zones, a sequence $A$ with $\card{A}\le\distel$ can be sorted
stably, in $O(\card{A})$ time and using $O(1)$ auxiliary words.
\end{lemma}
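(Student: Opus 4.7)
The plan is to prove the lemma by establishing a single linked-list invariant on the jump and pointer zones, verifying that the first step sets it up and the second step consumes it exactly, and then deriving stability, running time, and full restoration of the auxiliary structures from that invariant.

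After the preliminary phase, for each key $u$ the entries $J_1[u]$, $J_2[u]$, $J_3[u]$ hold elements with key $u$, and $P$ is a permutation of $\distel$ distinct-keyed pseudo pointers. My central claim about the first step is the following invariant: after processing $A[1..i]$, for every key $u$ occurring in $A[1..i]$ there is a singly-linked list whose head is $J_1[u]$, whose \emph{next} field at node $p$ is $J_2[p]$, and whose payload at node $p$ is the element stored in $J_3[p]$; moreover $\enca{I}_{bg}[u]=1$ iff this list is nonempty, and $\enca{I}_{en}[p]=1$ marks the tail node, i.e.\ the first-inserted occurrence of key $u$. With $u=A[i]$, $p=P[i]$, and $h=J_1[u]$, the three exchanges $J_1[u]\leftrightarrow J_2[p]$, $A[i]\leftrightarrow J_3[p]$, $P[i]\leftrightarrow J_1[u]$ simultaneously push $p$ as the new head of the list for $u$, install $h$ as its \emph{next} pointer, store the element into $J_3[p]$, and evacuate the previous contents of $J_2[p]$ into $P[i]$ so that the moves are invertible. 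Checking this is a routine case analysis, and it gives $O(\card{A})$ time for step~1.

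For the second step I would verify two things. First, if we traverse $G$ in the direction that makes sorted output consistent with $j$ counting down from $\card{A}$ (i.e.\ from the largest guide to the smallest) and extract each per-key list head-first in LIFO order, then the first-inserted occurrence of each key lands at the lowest position within its key-group, which is precisely stable sorted order; the sentinel $\enca{I}_{en}[p]=1$ terminates each list, and $\enca{I}_{bg}[g]$ is cleared the instant we open a list. Second, the three inner exchanges of step~2 are exactly the inverses of those of step~1, so after the full traversal of each list the affected entries of $J_1$, $J_2$, $J_3$, $P$ are back to their post-preliminary-phase values.

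The main obstacle is the bookkeeping for total restoration of the encoded flags. Because $\enca{I}_{bg}$ and $\enca{I}_{en}$ each have $n^{\subrangeexp}$ entries while the procedure is invoked repeatedly with $\card{A}$ possibly much smaller than $n^{\subrangeexp}$, we cannot afford to re-initialize them each time; the only way to preserve their invariants is to write and un-write exactly the same bits within a single invocation. Proving this for both flag arrays, jointly with the exchange-inversion property above, is the heart of the argument. Once these are in place, the running time bound follows (constant work per element in step~1; constant work per extracted element in step~2, plus a scan of $G$ which has size $\distel=\Theta(\card{A})$ in every invocation made by the aggregating phase), the auxiliary space is the $O(1)$ words holding $i$, $j$, $p$, $h$, $g$, and stability was established above.
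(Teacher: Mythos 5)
Your proof captures the same argument the paper's construction embodies: the step-1 linked-list invariant ($J_1[u]$ as head, $J_2[p]$ as next pointer, $J_3[p]$ as payload, $\enca{I}_{en}$ as tail sentinel), LIFO head-first extraction in step~2 at a decreasing index $j$, stability from the double reversal, and the insistence that $\enca{I}_{bg}$ and $\enca{I}_{en}$ be un-written exactly within a single invocation because they cannot be re-initialized. That is the paper's (implicit) proof, and your reading of the guide-scanning direction is the one required for the output to land sorted.

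One claim, though, is too strong. The three step-2 exchanges are not applied against the same operands nor in the reverse order of step~1 (step~1 visits $A$ by position $i$, step~2 visits $G$ by key rank and fills $A$ downward by $j$), so their composition is \emph{not} the identity on the zones. Tracing it out: $J_1[g]$ does come back exactly — the final swap at the tail restores $h$, the original $J_1[g]$, after the chain of $J_1[g]\leftrightarrow J_2[p_m]$ exchanges — but $J_2$, $J_3$, and $P$ end up only as permutations of their earlier contents, spread over different pseudo-pointer and index positions than before, because the $j$ values matched against a pseudo pointer $p$ in step~2 generally differ from the $i$ value that produced that $p$ in step~1. This permutation is harmless: $J_2$, $J_3$, $P$ serve purely as pools of fungible distinct-keyed placeholder elements, and the algorithm never reads a particular entry of them expecting a previously-stored value; only $J_1$ and the encoded flags carry semantics across invocations. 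So you should weaken ``the affected entries of $J_1$, $J_2$, $J_3$, $P$ are back to their post-preliminary-phase values'' to ``$J_1$ and the flags are restored exactly, while $J_2$, $J_3$, $P$ are restored as a permutation of the same placeholder pool,'' after which the argument closes correctly.
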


\lsubsubsection{Finding pivots for $\sequencep_l$}\label{subsubsec:pivots}
We find a set of pivots $\set{e_1,e_2\ldots,e_{p-1},e_p}$
with the following properties: 
$(i)$ $\card{\set{x\in\sequencep_l\,|\,x<e_1}}\le\distel$;
$(ii)$ $\card{\set{x\in\sequencep_l\,|\,e_p<x}}\le\distel$;
$(iii)$ $\card{\set{x\in\sequencep_l\,|\,e_i<x<e_{i+1}}}\le\distel$, for any
$1\le i<p$; 
$(iv)$ $p=\Thetah{\log^\polyexp n}$. 
In the end the pivots reside in the first $p$ positions of $D$.
We have four steps.

\inlinestep{First}. We allocate in the encoded memory an array $\enca{P}$ with
$\subrange$ entries of $\log \card{\sequencep_l}$ bits, but we \emph{do not
initialize each entry of} $\enca{P}$. We
initialize the only $\distel$ of them we will need: for any
$i=1\ldots\card{G}$, we set $\enca{P}[G[i]]$ to $0$. 

\inlinestep{Second}. We scan $\sequencep_l$ from left to
right. Let $u\in\sequencep_l$ be the $i$th element accessed, we increment
$\enca{P}[u]$ by $1$. 

\inlinestep{Third}. We sort the distribution zone $D$ using the algorithm
described in
\S~\ref{subsubsec:dist}.

\inlinestep{Fourth}. Let $i=1$, $j=0$ and $p=0$. We repeat the following process
until
$i>\card{G}$. $(i)$ Let $u=G[i]$, we set $j=j+\enca{P}[u]$. $(ii)$ If
$j<\distel$
we increase $i$ by $1$ and go to $(i)$. If $j\ge\distel$, we increase $p$
by $1$, exchange $D[i]$ with $D[p]$, increase $i$ by $1$, set $j$ to $0$ and
go to $(i)$.

\lsubsubsection{Sorting $\sequencep_l$}\label{subsubsec:sort}
Let $p$ be the number of pivots for $\sequencep_l$ selected in
\S~\ref{subsubsec:pivots} and now residing in the first $p$
positions of $D$. Let $\unitsize$ be $\log\card{\sequencep_l}$.
Let us assume that $H$ is next to the left end of $\sequencep_i$.
We have six steps.

\inlinestep{First}. Let $i=0$ and $j=0$. The following two steps are repeated
until $i>p$:
$(i)$ we increase $i$ by $p/\distfan$ and $j$ by $1$; $(ii)$ we exchange $D[i]$
with $T_1[j]$. We will denote with $p'$ the number of selected
pivots, now temporarily residing in the first $p'$ positions of $T_1$.

\inlinestep{Second}. Let us divide $\sequencep_l$ into
$q=\card{\sequencep_l}/\distel$ blocks $B_1B_2\ldots B_q$ of $\distel$
elements each. We sort each $B_i$ using the algorithm in
\S~\ref{subsubsec:dist}.

\inlinestep{Third}. With a sequence exchange we bring $H$ next to the right end
of
$\sequencep_l$. Let us divide $H$ into 
$H_1\hat{H}_1\ldots H_{p'}\hat{H}_{p'}H_{p'+1}H'$,
where $\card{H_{p'+1}}=\card{H_i}=\vert\hat{H}_i\vert=\unitsize$. Let
$f=\card{\sequencep_l}/\unitsize$. We
allocate the following arrays: $(i)$
$\enca{U}_{suc}$ and
$\enca{U}_{pre}$ both with $f+2p'+1$ entries of $\Thetah{\unitsize}$
bits; $(ii)$ $\enca{H}$ and $\hat{\enca{H}}$ with $p'+1$ and $p'$ entries of
$\Thetah{\log \unitsize}$ bits; $(iii)$ $\enca{L}$ and $\hat{\enca{L}}$
with $p'+1$ and $p'$ entries of $\Thetah{\unitsize}$ bits; $(iv)$
$\enca{N}$ and $\hat{\enca{N}}$ with $p'+1$ and $p'$
entries of $\Thetah{\unitsize}$ bits. 
Each entry of any array is initialized to $0$. 

\inlinestep{Fourth}. In this step we want to transform $\sequencep_l$ and $H$ in
the
following ways. We pass from $\sequencep_l$ to 
$U_1U_2\ldots U_{f'-1}U_{f'}H''$, where the $U_i$'s are called \emph{units}, 
for which the following holds. 
\begin{itemize}

\item[$(i)$] $f'\ge f-(2p'+1)$ and
$\card{U_i}=\unitsize$, for any $1\le i \le f'$. 

\item[$(ii)$] For any $U_i$, $1\le i
\le f'$, one of the 
following holds: $(a)$ there exists a $1\le j\le p'$ such that
$x=T_1[j]$, for any $x\in U_i$; $(b)$ $x<T_1[1]$, for
any $x\in U_i$; $(c)$ $T_1[p']<x$, for any $x\in U_i$; $(d)$ there exists a 
$1\le j'\le p'-1$ such that $T_1[j']<x<T_1[j'+1]$, for any $x\in U_i$.

\item[$(iii)$] Let us call a \emph{set of
related units} a maximal set of units
$\seta{U}=\set{U_{i_1},U_{i_2},\ldots,U_{i_z}}$ for which one of the 
following conditions holds: $(a)$ there exists a $1\le j\le p'$ such that
$x=T_1[j]$, for any $x\in U_i$ and for any $U_i\in\seta{U}$; $(b)$ $x<T_1[1]$,
for any $x\in U_i$ and for any $U_i\in\seta{U}$; $(c)$ $T_1[p']<x$, for any
$x\in U_i$
and for any $U_i\in\seta{U}$; $(d)$ there exists a $1\le j'\le p'-1$
such that $T_1[j']<x<T_1[j'+1]$, for any $x\in U_i$ and for any
$U_i\in\seta{U}$.
For any set of related units $\seta{U}=\set{U_{i_1},U_{i_2},\ldots,U_{i_z}}$
we have that $\enca{U}_{suc}[i_y]=i_{y+1}$ and
$\enca{U}_{pre}[i_{y+1}]=i_{y}$, for
any $1\le y\le z-1$.
\end{itemize}
Concerning $H''$ and $H=H_1\hat{H}_1\ldots H_{p'}\hat{H}_{p'}H_{p'+1}H'$. Before
this step all the elements in $H$ were the original ones gathered in
\S~\ref{subsubsec:zones}. After
the fourth step, the
following will hold. 
\begin{itemize}
\item[$(iv)$] The elements in $H'$ and $H''$ plus the elements in
$H_i[\enca{H}[i]+1\ldots \unitsize]$ and 
in $\hat{H}_{i'}[\hat{\enca{H}}[i']+1\ldots \unitsize]$, for any $1\le i\le
p'+1$ and
$1\le i'\le p'$, form the original set of elements that were in $H$ before
the fourth step. 

\item[$(v)$] We have that: $(a)$ $x<T_1[1]$, for any $x\in
H_1[1\ldots\enca{H}[1]]$; $(b)$ $x>T_1[p']$, for any 
$x\in H_{p'+1}[1\ldots\enca{H}[p'+1]]$; $(c)$ $T_1[i-1]<x<T_1[i]$, for any 
$x\in H_{i}[1\ldots\enca{H}[i]]$ and any $2\le i\le p'$; 
$(d)$ $x=T_1[i]$, for any 
$x\in \hat{H}_{i}[1\ldots\hat{\enca{H}}[i]]$ and any $1\le i\le p'$.

\item[$(vi)$] $(a)$ Let $j=\enca{L}[1]$ ($j=\enca{L}[p']$), $U_j$ is the
rightmost unit such that $x<T_1[1]$ ($x>T_1[p']$) for any $x\in U_j$. $(b)$
For any $2\le i\le p'$, let $j=\enca{L}[i]$, $U_j$ is the rightmost unit
such that $T_1[i-1]<x<T_1[i]$ for any $x\in U_j$. 
$(c)$ For any $2\le i\le p'$, let $j=\hat{\enca{L}}[i]$, $U_j$ is the rightmost
unit such that $x=T_1[i]$ for any $x\in U_j$. 

\item[$(vii)$] $(a)$ $\enca{N}[1]$ ($\enca{N}[p']$) is the number of 
$x\in\sequencep_l$ 
such that $x<T_1[1]$ ($x>T_1[p']$). $(b)$
For any $2\le i\le p'$, $\enca{N}[i]$ is the number of $x\in\sequencep_l$ 
such that $T_1[i-1]<x<T_1[i]$. 
$(c)$ For any $2\le i\le p'$, $\hat{\enca{N}}[i]$ is the number of
$x\in\sequencep_l$ such that $x=T_1[i]$. 

\end{itemize}
Let $h=\enca{H}[1]$, let $i=1$ and let $j=1$. We start the fourth step
by scanning $B_1$. If
$B_1[i]<T_1[1]$, we increase $h$, $\enca{H}[1]$ and $\enca{N}[1]$ by $1$,
exchange $B[i]$ with $H_1[h]$ and increase $i$ by $1$. This sub-process goes on
until one of the following two events happens: $(a)$ $h$ and $\enca{H}[1]$ are
equal to $\unitsize+1$; $(b)$ $B_1[i]\ge T_1[1]$. If event $(a)$ happens, we
exchange the $\unitsize$ elements currently in $H_1$ with
$\sequencep_l[(j-1)\unitsize+1\ldots j\unitsize]$. Then,
we set $\enca{U}_{pre}[j]$ to $\enca{L}[1]$, $\enca{U}_{suc}[\enca{L}[1]]$ to
$j$ and $\enca{L}[1]$ to $j$. After that, we set $h$ and $\enca{H}[1]$ to
$0$ and we increment $j$ by $1$. Finally, we go back to the scanning of
$B_1$. Otherwise, if event $(b)$ happens, we set $h$ to $\hat{\enca{H}}[1]$ and
we continue the scanning of $B_1$ but with the following sub-process: if
$B_1[i]=T_1[1]$, we increase $h$, $\hat{\enca{H}}[1]$ and $\hat{\enca{N}}[1]$ by
$1$, exchange $B[i]$ with $\hat{H}_1[h]$ and increase $i$ by $1$. In its turn,
this sub-process goes on until one of the following two events happens: $(a')$
$h$ and $\hat{\enca{H}}[1]$ are equal to $\unitsize+1$; $(b')$ $B_1[i]>T_1[1]$.
Similarly to what we did for event $(a)$, if event $(a')$ happens, we exchange
the $\unitsize$ elements currently in $\hat{H}_1$ with
$\sequencep_l[(j-1)\unitsize+1\ldots j\unitsize]$. Then, we set
$\enca{U}_{pre}[j]$ to $\hat{\enca{L}}[1]$, $\enca{U}_{suc}[\hat{\enca{L}}[1]]$
to $j$ and $\hat{\enca{L}}[1]$ to $j$. After that, we set $h$ and
$\hat{\enca{H}}[1]$ to $0$ and we increment $j$ by $1$. Finally, we go back to
the scanning of $B_1$. Otherwise, if event $(b')$ happens, we
set $h$ to $\enca{H}[2]$ and we
continue the scanning of $B_1$ but with the following sub-process: if
$B_1[i]<T_1[2]$, we increase $h$, $\enca{H}[2]$ and $\enca{N}[2]$ by $1$,
exchange $B[i]$ with $H_2[h]$ and increase $i$ by $1$. We continue in this
fashion, possibly passing to $\hat{\enca{H}}[2]$, $\enca{H}[3]$, etc, until
$B_1$ is exhausted. Then, the whole process is applied to $B_2$
from the beginning. When $B_2$ is exhausted we pass to $B_3$, $B_4$ and so forth
until each block is exhausted. 

\inlinestep{Fifth}. We start by exchanging $H''$ and $H_1\hat{H}_1\ldots
H_{p'}\hat{H}_{p'}H_{p'+1}$. 
Let $h=1$ and $h'=\unitsize-\enca{H}[1]$. We exchange the elements in
$H_1[\enca{H}[1]+1\ldots \unitsize]$ with the ones in $T_2[h\ldots h']$.
After that, we set $h=h'$, increment $h'$ by $\unitsize-\hat{\enca{H}}[1]$ and
exchange
the elements in $\hat{H}_1[\hat{\enca{H}}[1]+1\ldots \unitsize]$ with the ones
in
$T_2[h\ldots h']$. We proceed in this fashion until $H_{p'+1}$ is done. Then we
exchange the elements in $H''H'$ with the rightmost $\card{H''H'}$ ones in
$T_2$. After that we execute the following process to ``link'' the $H_i$'s and
$\hat{H}_i$'s to their respective sets of related units. We start by setting
$\enca{U}_{pre}[f'+1]$ to $\enca{L}[1]$ and $\enca{U}_{suc}[\enca{L}[1]]$ to
$f'+1$. Then we set $\enca{U}_{pre}[f'+2]$ to $\hat{\enca{L}}[1]$ and
$\enca{U}_{suc}[\hat{\enca{L}}[1]]$ to $f'+2$. We proceed in this fashion until
we set $\enca{U}_{pre}[f'+2p'+1]$ to $\enca{L}[p'+1]$ and
$\enca{U}_{suc}[\enca{L}[p'+1]]$ to $f'+2p'+1$.
After that we execute the following process to bring each set
of related units into a contiguous zone.
Let $i=\flo{\enca{N}[1]/\unitsize}+1$, we exchange $H_{1}$ with
$U_i$; we swap the values in $\enca{U}_{pre}[i]$ and $\enca{U}_{pre}[f'+1]$,
then the values in $\enca{U}_{suc}[i]$ and $\enca{U}_{suc}[f'+1]$; we set
$\enca{U}_{pre}[\enca{U}_{suc}[f'+1]]=f'+1$ and
$\enca{U}_{suc}[\enca{U}_{pre}[f'+1]]=f'+1$; finally, we set 
$\enca{U}_{suc}[\enca{U}_{pre}[i]]=i$. Then, let $j=\enca{U}_{pre}[i]$, we
decrement $i$ by $1$, we exchange $U_j$ with $U_i$; we swap the values in
$\enca{U}_{pre}[i]$ and $\enca{U}_{pre}[j]$,
then the values in $\enca{U}_{suc}[i]$ and $\enca{U}_{suc}[j]$; we set
$\enca{U}_{pre}[\enca{U}_{suc}[j]]=j$ and
$\enca{U}_{suc}[\enca{U}_{pre}[j]]=j$; finally, we set
$\enca{U}_{pre}[\enca{U}_{suc}[i]]=i$ and
$\enca{U}_{suc}[\enca{U}_{pre}[i]]=i$. We proceed in this fashion until the
entire set of related units of $H_1$ resides in 
$\sequencep_l[1\ldots\enca{N}[1]-\enca{H}[1]]$ ($H_1$ now resides in
$\sequencep_l[\enca{N}[1]-\enca{H}[1]+1\ldots
\enca{N}[1]-\enca{H}[1]+\unitsize]$). 
After that, we
apply the same process to $\hat{H}_1$, $H_2$, $\hat{H}_2$ and so forth until
every set of related units has been compacted into a contiguous zone. We end up
with the sequence
$\overline{U}_1H_1\hat{\overline{U}}_1\hat{H}_1\ldots
\overline{U}_{p'}H_{p'}\hat{\overline{U}}_{p'}\hat{H}_{p'}\overline{U}_{p'+1}H_{
p'+1}H''H'$, where each $\overline{U}_i$ contains the set of related units of
$H_i$ and each $\hat{\overline{U}}_i$ the set of related units of $\hat{H}_i$.
Finally, we proceed to separate
the elements of $\sequencep_l$ from the ones residing in 
$H_i[\enca{H}[i]+1\ldots \unitsize]$ and $\hat{H}_{i'}[\enca{H}[i']+1\ldots
\unitsize]$, for
any $1\le i\le p'+1$ and any $1\le i'\le p'$. Since the ``intruders''
were previously residing in $T_2$, by the first and sixth steps in
\S~\ref{subsubsec:zones} we know that any of them is less than any of the
elements of $\sequencep_l$. Therefore we can separate them with the
stable partitioning algorithm in \cite{KP92} and end up with the sequence
$R_1\hat{R}_1\ldots R_{p'}\hat{R}_{p'}R_{p'+1}H'''H''H'$. Finally, we exchange
the elements in the sequence $H=H'''H''H'$ with the ones in $T_2$, getting back
in $H$ its original (distinct) elements.

\inlinestep{Sixth}. After the fifth step we are left with the sequence 
$R_1\hat{R}_1\ldots R_{p'}\hat{R}_{p'}R_{p'+1}H$ for which the following holds:
$(i)$ $\enca{N}[i]=\card{R_i}$ and $\hat{\enca{N}}[i']=\card{R_{i'}}$, for any
$1\le i\le p'+1$ and any $1\le i'\le p'$;
$(ii)$ for any $x\in R_1$, $x<T_1[1]$; $(iii)$ for any $x\in R_{p'+1}$,
$x<T_1[p']$; $(iv)$ for any $x\in\hat{R}_i$, $x=T_1[i]$ ($1\le i\le p'$);
$(v)$ for any $x\in R_i$, with $1\le i<p'$, $T_1[i]<x<T_1[i+1]$. 
We begin by moving $H$ before $R_1$ with a sequence exchange.
Since we do not need the $p'$ pivots anymore, we put them back in their
original positions in $D[1\ldots p]$ executing once again the process in the
first step. Then, we allocate in the encoded memory an array $\enca{R}$
with $p'+1$ entries of $\Thetah{\unitsize}$ bits. 
Let $i=1$ and $\enca{R}[1]=1$: for
$j=2,\ldots,p'+1$ we increment $i$ by $\enca{N}[j-1]+\hat{\enca{N}}[j-1]$ and
set $\enca{R}[j]=i$. After that, we execute a series of $p'+1$ recursive
invocations of the procedure here in
\S~\ref{subsubsec:sort} (not the whole sorting algorithm).
We start by sorting $R_1=\sequencep_l[\enca{R}[1]\ldots\enca{R}[2]-1]$
recursively with the same procedure in this section: we use
$\sequencep_l[\enca{R}[1]\ldots\enca{R}[2]-1]$ in place of $\sequencep_l$,
$D[1\ldots p/t-1]$ in place of $D$ and so forth. After $R_1$ is sorted, we
swap $H$ and $R_1$ with a sequence exchange and proceed to sort
$R_2$: we use $\sequencep_l[\enca{R}[2]\ldots\enca{R}[3]-1]$ in place of
$\sequencep_l$, $D[p/t+1\ldots 2p/t-1]$ in place of $D$ and so forth. We
proceed in this fashion until $R_{p'+1}$ is sorted and $H$ is located
after it again. We do not need anything particularly complex to handle the
recursion with $\Oh{1}$ words since there can be only $O(1)$ nested
invocations.

\begin{lemma}\label{lem:aggr}
The aggregation phase requires $O(n)$ time, uses $O(1)$
auxiliary words and is stable.
\end{lemma}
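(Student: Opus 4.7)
The plan is to bound time, space, and stability separately, using Lemma~\ref{lem:dist} as a black box for every subroutine that sorts $\le \distel$ contiguous elements. The aggregation phase is just a loop over the $\numaggr = \card{\sequencep}/(\distel \log^{\polyexp} n)$ slabs $\sequencep_l$, so it suffices to show that each $\sequencep_l$ is processed in $O(\card{\sequencep_l})$ time, using only $O(1)$ normal auxiliary words on top of the encoded and exchange memories established by the preliminary phase, and in a stable manner; summing over the slabs then gives $O(n)$.

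For space, every counter, index and temporary used in \S\ref{subsubsec:pivots} and \S\ref{subsubsec:sort} is a single normal word, and only $O(1)$ of them are live at any time. All other state lives in the encoded memory $\mathscr{B}$ or in the zones $J_1, J_2, J_3, G, D, P, T_1, T_2, T_3, H$ built once in \S\ref{subsubsec:zones} and reused across slabs. A direct sum shows that the encoded arrays $\enca{P}$, $\enca{U}_{suc}$, $\enca{U}_{pre}$, $\enca{H}$, $\hat{\enca{H}}$, $\enca{L}$, $\hat{\enca{L}}$, $\enca{N}$, $\hat{\enca{N}}$, $\enca{R}$ allocated for one $\sequencep_l$ occupy $O((f+p')\unitsize) = O(\card{\sequencep_l})$ bits, which fits inside the $\Theta(n/\log n)$ bits of $\mathscr{B}$ provided by the preliminary phase.

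For time, I would argue as follows. In \S\ref{subsubsec:pivots}, only the $\distel$ entries of $\enca{P}$ indexed by $G$ are initialised, the scan of $\sequencep_l$ performs $\card{\sequencep_l}$ encoded $+1$ increments amortised to $O(1)$ each by the bit-steal counter trick of \S\ref{subsubsec:bitstealing}, sorting $D$ costs $O(\distel)$ by Lemma~\ref{lem:dist}, and the final pass over $G$ is $O(\distel)$. In \S\ref{subsubsec:sort}, step~2 sorts $q=\card{\sequencep_l}/\distel$ blocks via Lemma~\ref{lem:dist} for $O(\card{\sequencep_l})$; in the crucial step~4 each $B_i$ is already sorted, so inside it the pivot pointer into $T_1$ advances monotonically and the cost is $O(\card{B_i}+p')$ per block, and the hypothesis $\distel \ge (2\distfan+1)\log\card{\sequencep_l}$ makes $qp' = O(\card{\sequencep_l})$. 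Each full unit swap in steps~4 and~5 costs $O(\unitsize)$ element moves plus $O(1)$ encoded pointer updates, and only $O(f)=O(\card{\sequencep_l}/\unitsize)$ such swaps occur, for another $O(\card{\sequencep_l})$. Step~6 recurses on the $R_i$, each of size at most $(p/\distfan)\distel=\card{\sequencep_l}/\distfan$; since $\distfan=\log^{\polydist}n$ and $\card{\sequencep_l}/\distel=\log^{\polyexp}n$, the recursion bottoms out at size $\le\distel$ (handled by Lemma~\ref{lem:dist}) after $\polyexp/\polydist=O(1)$ levels, and the work at each level telescopes to $O(\card{\sequencep_l})$.

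Stability is the subtlest point and what I expect to be the main obstacle. Step~2 is stable by Lemma~\ref{lem:dist}. In step~4, scanning $B_i$ left-to-right and appending to the tail of $H_j$ or $\hat{H}_j$ preserves the left-to-right order of equal keys, and each chain $\enca{U}_{suc}/\enca{U}_{pre}$ of related units is populated in the order in which the units become full, which coincides with their left-to-right order inside $\sequencep_l$. The compaction of step~5 walks this chain via $\enca{U}_{pre}$ and writes units into the destination zone in chain order, so equal keys keep their relative order; the ``intruders'' coming from $T_2$ are extracted by the stable partitioning of \cite{KP92}. Step~6 inherits stability by induction on the recursion depth, which is $O(1)$. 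The hardest piece will be verifying step~5: one must check that the intricate swap-and-pointer-update schedule really preserves chain order in the presence of the temporary placeholders $H_j, \hat{H}_j$, and that the encoded pointer updates are amortised against the $\unitsize$-element unit moves so the overall running time stays linear.
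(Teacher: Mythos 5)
The paper states Lemma~\ref{lem:aggr} without a displayed proof; the intended argument is exactly the per-slab accounting your proposal spells out (bound each $\sequencep_l$ by $O(\card{\sequencep_l})$ time using Lemma~\ref{lem:dist} for all $\le\distel$-sized sorts, charge encoded-pointer updates against the $\unitsize$-element unit moves, observe the step-6 recursion has constant depth because $\distfan=\log^{\polydist}n$ reduces the subproblem size by a $\polylog{n}$ factor each level, then sum over slabs). Your proposal takes essentially the same route, and your candid flagging of step~5 as the delicate stability point is exactly where the paper is also thinnest. One small inaccuracy worth fixing: $\enca{P}$ has $\subrange=n^{\subrangeexp}$ entries (only $\distel$ of them are ever touched), so its footprint is $O(\subrange\,\unitsize)$ bits, not $O((f+p')\unitsize)$; both still fit in the $\Theta(n/\log n)$-bit encoded memory when $\subrangeexp<1$, so the conclusion stands, but the displayed equality is wrong as written.
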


\subsection{Final Phase}\label{subsec:final}
The final phase has two main steps described in
\S~\ref{subsubsec:sortsequencep} and \S~\ref{subsubsec:sortzones}.

\lsubsubsection{Sorting the main sequence
$\sequencep$}\label{subsubsec:sortsequencep}
After the aggregating phase we are left with
$\sequencep=\sequencep_1\sequencep_2\ldots\sequencep_\numaggr$ where,
for any $1\le i\le\numaggr$,
$\sequencep_i$ is sorted and $\card{\sequencep_i}=\distel\log^\polyexp n$. 
Let $\finalunitsize=\log^2 n$. We have three steps.

\inlinestep{First}. We allocate in the encoded memory an array
$\enca{\sequence}'_1$ with
$\card{\sequencep_1}$ entries of one bit initially
set to $1$. Then, we scan $\sequencep_1$. During the scan,
as soon as we encounter a subsequence $\sequencep_1[i\ldots
i+\finalunitsize-1]$ of equal elements (that is a subsequence of
$\finalunitsize$ consecutive equal elements) we set $\enca{\sequence}'_1[i]$,
$\enca{\sequence}'_1[i+1]$\ldots$\enca{\sequence}'_1[i+\finalunitsize-2]$ and
$\enca{\sequence}'_1[i+\finalunitsize-1]$ to $0$. After that we use the
partitioning algorithm in \cite{KP92} to separate the
elements of $\sequencep_1$ with the corresponding entries in $\enca{\sequence}'$
set to $1$ from the ones with their entries set to $0$ (during the
execution of the partitioning algorithm in \cite{KP92}, each time two elements
are exchanged, the values of their entries in $\enca{\sequence}'$ are 
exchanged too). After the partitioning we have
that $\sequencep_1=\sequences_1O_1$ and the following conditions hold: $(i)$ 
$\sequences_1$ and $O_1$ are still sorted ($\sequencep_1$ was sorted and the
partitioning is stable); $(ii)$ the length of any maximal subsequence of
consecutive equal elements of $\sequences_1$ is a multiple of $\finalunitsize$;
$(iii)$ $\card{O_1}\le \distel\finalunitsize$.
Then, we merge $O_1$ and $\sequencep_2$ using the merging
algorithm in \cite{SS87}, obtaining a sorted sequence $\sequencet_2$
with $\card{\sequencet_2}<2\distel\log^\polyexp n$. After that, we apply to
$\sequencet_2$ the same process we applied to $\sequencep_1$ ending up with
$\sequencet_2=\sequences_2O_2$ where conditions $(i)$, $(ii)$ and $(iii)$ hold
for $\sequences_2$ and $O_2$ too. We proceed in this fashion until each
$\sequencep_\numaggr$ is done.

\inlinestep{Second}. We have that $\sequencep=\sequences O$.
The following conditions hold: $(i)$ $O$ is sorted and $\card{O}\le
\distel\finalunitsize$; $(ii)$ the length of any maximal subsequence of
consecutive equal elements of $\sequences_1$ is a multiple of $\finalunitsize$
(and so $\card{\sequences}$ is a multiple of $\finalunitsize$ too).
Let $\finalunitnum=\card{\sequences}/\finalunitsize$ and let us
divide $\sequences$ into $\finalunitnum$ subsequences $F_1F_2\ldots
F_{\finalunitnum-1}F_\finalunitnum$ with $\card{F_i}=\finalunitsize$.
We allocate in the encoded memory two arrays $\enca{\sequence}''_{pre}$ and
$\enca{\sequence}''_{suc}$, each one with $\finalunitnum$
entries of $\Thetah{\log n}$ bits. We also allocate an array $\enca{C}$
with $\subrange$ ($=n^\subrangeexp$) entries of $\Thetah{\log n}$ bits, each
one initialized to $0$.
Then, for each $F_i$ from the rightmost to the leftmost one, we do the
following: Let $v=F_i[1]$. If $\enca{C}[v]=0$, we set
$\enca{\sequence}''_{suc}[i]=0$ and $\enca{C}[v]=i$. Otherwise, if
$\enca{C}[v]\not=0$, we set $\enca{\sequence}''_{suc}[i]=\enca{C}[v]$,
$\enca{\sequence}''_{pre}[\enca{C}[v]]=i$ and $\enca{C}[v]=i$.

\inlinestep{Third}. We scan $\enca{C}$ and find the leftmost entry not equal to
$0$, let it
be $i$. Let $j=\enca{C}[i]$, we exchange $F_1$ with $F_j$ and do the following:
$(i)$ we swap the values in
$\enca{\sequence}''_{pre}[1]$ and $\enca{\sequence}''_{pre}[j]$,
then the values in $\enca{\sequence}''_{suc}[1]$ and
$\enca{\sequence}''_{suc}[j]$; 
$(ii)$ we set
$\enca{\sequence}''_{pre}[\enca{\sequence}''_{suc}[j]]=j$ and
$\enca{\sequence}''_{suc}[\enca{\sequence}''_{pre}[j]]=j$; 
$(iii)$ finally, we set
$\enca{\sequence}''_{pre}[\enca{\sequence}''_{suc}[1]]=1$ and
$\enca{\sequence}''_{suc}[\enca{\sequence}''_{pre}[1]]=1$.
Then, let $j'=\enca{\sequence}''_{suc}[1]$. We exchange $F_2$ and $F_{j'}$ and
then we make similar adjustments to their entries in $\enca{\sequence}''_{pre}$
and $\enca{\sequence}''_{suc}$. We proceed in this fashion until we exhaust the
linked list associated with the $i$th entry of $\enca{C}$. After that we
continue to scan $\enca{C}$, find the leftmost non-zero entry $i'>i$ and
process its associated list in the same way. 
At the end of the process, the $F_i$'s have been permuted in sorted 
stable order. Now both $\sequences$ and $O$ are sorted. We merge
them with the merging algorithm in \cite{SS87}
and $\sequencep$ is finally sorted. 

\lsubsubsection{Taking care of the encoded memory and the zones}%
\label{subsubsec:sortzones}
In the last main step of the final phase we sort all those
zones that have been built in the preliminary phase. 
%
We scan $G$, let $u\in G$ be the $i$th element
accessed, we exchange $T_1[i]$,
$T_2[i]$ and $T_3[i]$ with $J_1[u]$, $J_2[u]$ and $J_3[u]$, respectively. 
%
We swap $\sequences$ and $H$ ($H$
has been moved after $\sequences$ at the end of the aggregating phase). $W$,
$G$, $D$, $P$, $T_1$, $T_2$, $T_3$ and $H$
have $\Oh{\distel}=\Oh{n^\subrangeexp}$ elements
and we can sort them using the mergesort in
\cite{SS87}. The obtained sequence can now be merged with $\sequences$ using the
merging algorithm in \cite{SS87}.
%
$J_1$, $J_2$, $J_3$ and $V$ have $\Oh{n^\subrangeexp}$
elements. We sort them using the mergesort in~\cite{SS87}. 
Finally, $M'$ and $M''$ have $\Thetah{n/\log n}$ elements. We sort them with
the mergesort in~\cite{SS87} and we are done. 

\begin{lemma}\label{lem:final}
The final phase requires $O(n)$ time, uses $O(1)$
auxiliary words and is stable.
\end{lemma}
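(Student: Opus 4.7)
The plan is to verify the three assertions---$O(n)$ time, $O(1)$ auxiliary words, and stability---for the two sub-phases described in Sections~\ref{subsubsec:sortsequencep} and \ref{subsubsec:sortzones} in that order, using throughout the bit memory $\mathscr{B}$ and exchange memory $\mathscr{E}$ set up in the preliminary phase together with the stable primitives of \cite{SS87,KP92,KP94}. First I would account for space: every auxiliary structure lives inside $\mathscr{B}$. The one-bit array $\enca{\sequence}'_i$ of length $|\sequencep_i| = \distel\log^\polyexp n$, the arrays $\enca{\sequence}''_{pre}$ and $\enca{\sequence}''_{suc}$ with $\finalunitnum = O(n/\finalunitsize) = O(n/\log^2 n)$ entries of $\Theta(\log n)$ bits, and $\enca{C}$ with $n^\subrangeexp$ entries of $\Theta(\log n)$ bits all fit in $O(n/\log n)$ bits, coexisting with the $\enca{I}_{bg}, \enca{I}_{en}$ already placed there. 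Outside $\mathscr{B}$ only $O(1)$ explicit index words are used, giving the auxiliary-space bound.

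Next I would analyze the time of Section~\ref{subsubsec:sortsequencep}. In the first sub-step, iteration $i$ scans $\sequencep_i$ to mark maximal runs of length a multiple of $\finalunitsize$ inside $\enca{\sequence}'_i$, stably partitions via \cite{KP92}, and stably merges the overflow $O_{i-1}$ into the incoming $\sequencep_i$ via \cite{SS87}. The key invariant to maintain is $|O_i| \le \distel\finalunitsize$, because each distinct key leaves at most $\finalunitsize-1$ un-padded elements; under this invariant each iteration costs $O(\distel\log^\polyexp n)$, and summing over $\numaggr = \Theta(n/(\distel\log^\polyexp n))$ iterations telescopes to $O(n)$. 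The cost of encoded accesses ($O(\log n)$ each with no word-parallelism) is absorbed because only $O(1)$ such accesses occur per processed element. The second sub-step zeros $\enca{C}$ in $O(n^\subrangeexp \log n) = o(n)$ time and makes one right-to-left pass over the $\finalunitnum$ units, linking each into the list for its leading key at cost $O(\log n)$ per unit, totalling $o(n)$. The third sub-step walks these lists, swapping whole units into sorted position; each of the $O(\finalunitnum)$ swaps moves $\finalunitsize$ elements and touches $O(1)$ encoded words, giving $O(\finalunitnum\finalunitsize + \finalunitnum\log n) = O(n)$. A concluding stable in-place merge of $\sequences$ with $O$ via \cite{SS87} is linear because $|O| = O(\distel\finalunitsize) = o(n)$.

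The cleanup in Section~\ref{subsubsec:sortzones} is routine: every zone except $M'$ and $M''$ has $O(n^\subrangeexp)$ elements and is mergesorted by \cite{SS87} in $O(n^\subrangeexp \log n) = o(n)$ time, while $M'$ and $M''$ have $\Theta(n/\log n)$ elements and each take $O((n/\log n)\log n) = O(n)$; the in-place stable merges that re-attach these zones to $\sequencep$ are also linear. Stability follows because every primitive invoked---selection and partitioning \cite{KP92,KP94}, mergesort and merging \cite{SS87}, and block exchanges by triple reversal---is stable, and the unit permutation in the third sub-step starts from the head of each value's linked list, which by construction of the right-to-left building scan is the leftmost original unit, so equal-key units retain their original order; within any run of equal keys the $\finalunitsize$ elements are interchangeable so element-level stability is preserved.

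The hardest part, I expect, is establishing the iteration-by-iteration invariant $|O_i|\le\distel\finalunitsize$ so that the telescoping in the first sub-step actually yields $O(n)$ rather than a logarithmic blow-up, and simultaneously checking that the partitioning that separates $\sequences_i$ from $O_i$ does not disturb the sortedness needed to keep the next merge linear. Once that invariant is pinned down the remaining time and stability accounting reduces to bookkeeping over the already-stated primitive bounds.
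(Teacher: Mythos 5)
Your proposal is correct and takes essentially the same route as the paper: the paper states Lemma~\ref{lem:final} without a separate proof block, relying on the reader to verify linearity, space, and stability directly from the step-by-step description in \S~\ref{subsubsec:sortsequencep}--\ref{subsubsec:sortzones}, and your write-up performs precisely that bookkeeping. Your space accounting (everything lives in $\mathscr{B}$ and fits in $O(n/\log n)$ bits), your per-iteration cost $O(\distel\log^\polyexp n)$ maintained via the invariant $\card{O_i}\le\distel\finalunitsize$ (which the paper asserts as condition $(iii)$), the $O(\finalunitnum\finalunitsize + \finalunitnum\log n)$ bound on the unit-permutation step, and the stability argument from the right-to-left list construction all match the intended analysis. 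Two cosmetic quibbles: the sum over iterations is a direct $\numaggr\cdot O(\distel\log^\polyexp n)=O(n)$ bound rather than a ``telescoping,'' and it is cleaner to say that each $F_i$ consists entirely of equal keys (because run lengths in $\sequences$ are multiples of $\finalunitsize$), so block swaps that preserve block order automatically preserve element order, rather than calling the elements ``interchangeable.''
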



\fi

\iffull
\section{Acknowledgements}
\else
\paragraph{Acknowledgements.}
\fi
Our sincere thanks to Michael Riley and Mehryar Mohri of 
Google, NY who, motivated by manipulating transitions of large finite state 
machines, asked if bucket sorting can be done in-place
in linear time. 

\bibliographystyle{plain}
\bibliography{biblio}

\newcommand{\lbk}{\linebreak[0]}\def\softl{l\kern -0.45ex\raise
  0.1ex\hbox{'}\kern -0.10ex}%
\begin{thebibliography}{10}

\bibitem{A}
Arne Andersson, Torben Hagerup, Stefan Nilsson, and Rajeev Raman.
\newblock Sorting in linear time?
\newblock {\em Journal of Computer and System Sciences}, 57(1):74--93, August
  1998.

\bibitem{CLR}
Thomas~H. Cormen, Charles~E. Leiserson, Ronald~L. Rivest, and Clifford Stein.
\newblock {\em Introduction to Algorithms.}
\newblock MIT Press, 2001.

\bibitem{FredmanWillard}
M.~L. Fredman and D.~E. Willard.
\newblock Surpassing the information theoretic bound with fusion trees.
\newblock {\em J. Comput. System Sci.}, 47:424--436, 1993.

\bibitem{fredman94atomic}
M.~L. Fredman and D.~E. Willard.
\newblock Trans-dichotomous algorithms for minimum spanning trees and shortest
  paths.
\newblock {\em J. Comput. System Sci.}, 48(3):533--551, 1994.

\bibitem{HT}
Yijie Han and Mikkel Thorup.
\newblock Integer sorting in {$O(n \sqrt{\log \log n})$} expected time and
  linear space.
\newblock In {\em FOCS}, pages 135--144. IEEE Computer Society, 2002.

\bibitem{KP92}
Jyrki Katajainen and Tomi Pasanen.
\newblock Stable minimum space partitioning in linear time.
\newblock {\em BIT}, 32(4):580--585, 1992.

\bibitem{KP94}
Jyrki Katajainen and Tomi Pasanen.
\newblock Sorting multisets stably in minimum space.
\newblock {\em Acta Informatica}, 31(4):301--313, 1994.

\bibitem{kronrod}
M.~A. Kronrod.
\newblock Optimal ordering algorithm without operational field.
\newblock {\em Soviet Math. Dokl.}, 10:744--746, 1969.

\bibitem{Mu86}
J.~Ian Munro.
\newblock An implicit data structure supporting insertion, deletion, and search
  in {$O(\log^2 n)$} time.
\newblock {\em Journal of Computer and System Sciences}, 33(1):66--74, 1986.

\bibitem{SS87}
Jeffrey Salowe and William Steiger.
\newblock Simplified stable merging tasks.
\newblock {\em Journal of Algorithms}, 8(4):557--571, December 1987.

\end{thebibliography}

\end{document}